\def\Coloneqq{\mathbin{::=}}
\def\coloneqq{\mathbin{:=}}
\newtheorem{lemma}{Lemma}[section]
\newtheorem{theorem}[lemma]{Theorem}
\newtheorem{corollary}[lemma]{Corollary}
\newtheorem{definition}[lemma]{Definition}
\definecolor{tmcolor}{HTML}{C02F1D}
\definecolor{tycolor}{HTML}{1496BB}
\definecolor{cscolor}{HTML}{A3B86C}
\providecommand{\tm}[1]{\textcolor{Red}{\ensuremath{\normalfont#1}}}
\providecommand{\ty}[1]{\textcolor{NavyBlue}{\ensuremath{\normalfont#1}}}
\providecommand{\seq}[2][]{\ensuremath{\tm{#1}\;\vdash\;\ty{#2}}}
\providecommand{\tmty}[2]{\ensuremath{\tm{#1}\colon\!\ty{#2}}}
\providecommand{\NOM}[1]{\RightLabel{\textsc{#1}}}
\providecommand{\SYM}[1]{\RightLabel{\ensuremath{#1}}}
\newenvironment{prooftree*}{\leavevmode\hbox\bgroup}{\DisplayProof\egroup}
  {\gdef\scalefactor{#1}\begin{center}\proofSkipAmount \leavevmode}%
  {\scalebox{\scalefactor}{\DisplayProof}\proofSkipAmount \end{center} }
\newenvironment{scprooftree*}[1][1]%
  {\gdef\scalefactor{#1}\leavevmode\hbox\bgroup}
  {\DisplayProof\egroup} 
\providecommand{\moveMixDown}{\leadsto}
\providecommand{\mtf}[1]{\ensuremath{\llbracket #1 \rrbracket}}
\providecommand{\fv}[1]{\ensuremath{\text{fv}(#1)}}
\providecommand{\notFreeIn}[2]{\tm{#1}\not\in\tm{#2}}
\providecommand{\reducesto}[3][]{\ensuremath{\tm{#2}\overset{#1}{\Longrightarrow}\tm{#3}}}
\providecommand{\ppar}{\ensuremath{\mid}}
\providecommand{\piBoundSend}[3]{\ensuremath{#1[ #2 ].#3}}
\providecommand{\piRecv}[3]{\ensuremath{#1( #2 ).#3}}
\providecommand{\piPar}[2]{\ensuremath{#1 \ppar #2}}
\providecommand{\piNew}[2]{\ensuremath{(\nu #1)#2}}
\providecommand{\piHalt}[0]{\ensuremath{0}}
\providecommand{\piSub}[3]{\ensuremath{#3\{#1/#2\}}}
\providecommand{\piDILL}{\textpi DILL\xspace}
\providecommand{\cp}{CP\xspace}
\providecommand{\cpLink}[2]{\ensuremath{#1{\leftrightarrow}#2}}
\providecommand{\cpCut}[3]{\ensuremath{\piNew{#1}({\piPar{#2}{#3}})}}
\providecommand{\cpSend}[4]{\ensuremath{#1[#2].(\piPar{#3}{#4})}}
\providecommand{\cpRecv}[3]{\ensuremath{#1(#2).#3}}
\providecommand{\cpWait}[2]{\ensuremath{#1().#2}}
\providecommand{\cpHalt}[1]{\ensuremath{#1[].0}}
\providecommand{\cpInl}[2]{\ensuremath{#1\triangleleft\texttt{inl}.#2}}
\providecommand{\cpInr}[2]{\ensuremath{#1\triangleleft\texttt{inr}.#2}}
\providecommand{\cpCase}[3]{\ensuremath{#1\triangleright\{\texttt{inl}:#2;\texttt{inr}:#3\}}}
\providecommand{\cpAbsurd}[1]{\ensuremath{#1\triangleright\{\}}}
\providecommand{\cpSub}[3]{\ensuremath{\piSub{#1}{#2}{#3}}}
\providecommand{\parr}{\mathbin{\bindnasrepma}}
\providecommand{\with}{\mathbin{\binampersand}}
\providecommand{\plus}{\ensuremath{\oplus}}
\providecommand{\tens}{\ensuremath{\otimes}}
\providecommand{\one}{\ensuremath{\mathbf{1}}}
\providecommand{\nil}{\ensuremath{\mathbf{0}}}
\providecommand{\emptycontext}{\ensuremath{\,\cdot\,}}
\providecommand{\bigtens}{\ensuremath{\scalerel*{\tens}{\sum}}}
\providecommand{\bigparr}{\ensuremath{\scalerel*{\parr}{\sum}}}
\newcommand{\cpEquivLinkComm}{\ensuremath{(\cpLink{}{}\text{-sym})}\xspace}
\newcommand{\cpEquivCutComm}{\ensuremath{(\nu\text{-comm})}\xspace}
\newcommand{\cpEquivCutAssNoParen}[1]{\ensuremath{\nu\text{-assoc}\xspace}}
\newcommand{\cpEquivCutAss}[1]{\ensuremath{(\cpEquivCutAssNoParen{#1})}\xspace}
\newcommand{\cpRedAxCut}[1]{\ensuremath{(\cpLink{}{})}\xspace}
\newcommand{\cpRedBetaTensParr}{\ensuremath{(\beta{\tens}{\parr})}\xspace}
\newcommand{\cpRedBetaOneBot}{\ensuremath{(\beta{\one}{\bot})}\xspace}
\newcommand{\cpRedBetaPlusWith}[1]{\ensuremath{(\beta{\plus}{\with}_{#1})}\xspace}
\newcommand{\cpRedGammaCut}{\ensuremath{(\gamma\nu)}\xspace}
\newcommand{\cpRedGammaEquiv}{\ensuremath{(\gamma{\equiv})}\xspace}
\providecommand{\cpInfAx}{%
  \begin{prooftree*}
    \AXC{$\vphantom{\seq[ Q ]{ \Delta, \tmty{y}{A^\bot} }}$}
    \NOM{Ax}
    \UIC{$\seq[ \cpLink{x}{y} ]{ \tmty{x}{A}, \tmty{y}{A^\bot} }$}
  \end{prooftree*}}
\providecommand{\cpInfCut}{%
  \begin{prooftree*}
    \AXC{$\seq[ P ]{ \Gamma, \tmty{x}{A} }$}
    \AXC{$\seq[ Q ]{ \Delta, \tmty{x}{A^\bot} }$}
    \NOM{Cut}
    \BIC{$\seq[ \cpCut{x}{P}{Q} ]{ \Gamma, \Delta }$}
  \end{prooftree*}}
\providecommand{\cpInfTens}{%
  \begin{prooftree*}
    \AXC{$\seq[ P ]{ \Gamma , \tmty{y}{A} }$}
    \AXC{$\seq[ Q ]{ \Delta , \tmty{x}{B} }$}
    \SYM{(\tens)}
    \BIC{$\seq[ \cpSend{x}{y}{P}{Q} ]{ \Gamma , \Delta , \tmty{x}{A \tens B} }$}
  \end{prooftree*}}
\providecommand{\cpInfParr}{%
  \begin{prooftree*}
    \AXC{$\seq[ P ]{ \Gamma , \tmty{y}{A} , \tmty{x}{B} }$}
    \SYM{(\parr)}
    \UIC{$\seq[ \cpRecv{x}{y}{P} ]{ \Gamma , \tmty{x}{A \parr B} }$}
  \end{prooftree*}}
\providecommand{\cpInfOne}{%
  \begin{prooftree*}
    \AXC{$\vphantom{\seq[ P ]{ \Gamma }}$}
    \SYM{(\one)}
    \UIC{$\seq[ \cpHalt{x} ]{ \tmty{x}{\one} }$}
  \end{prooftree*}}
\providecommand{\cpInfBot}{%
  \begin{prooftree*}
    \AXC{$\seq[ P ]{ \Gamma }$}
    \SYM{(\bot)}
    \UIC{$\seq[ \cpWait{x}{P} ]{ \Gamma , \tmty{x}{\bot} }$}
  \end{prooftree*}}
\providecommand{\cpInfPlus}[1]{%
  \ifdim#1pt=1pt
  \begin{prooftree*}
    \AXC{$\seq[ P ]{ \Gamma , \tmty{x}{A} }$}
    \SYM{(\plus_1)}
    \UIC{$\seq[{ \cpInl{x}{P} }]{ \Gamma , \tmty{x}{A \plus B} }$}
  \end{prooftree*}
  \else%
  \ifdim#1pt=2pt
  \begin{prooftree*}
    \AXC{$\seq[ P ]{ \Gamma , \tmty{x}{B} }$}
    \SYM{(\plus_2)}
    \UIC{$\seq[ \cpInr{x}{P} ]{ \Gamma , \tmty{x}{A \plus B} }$}
  \end{prooftree*}
  \else%
  \fi%
  \fi%
}
\providecommand{\cpInfWith}{%
  \begin{prooftree*}
    \AXC{$\seq[ P ]{ \Gamma , \tmty{x}{A} }$}
    \AXC{$\seq[ Q ]{ \Gamma , \tmty{x}{B} }$}
    \SYM{(\with)}
    \BIC{$\seq[ \cpCase{x}{P}{Q} ]{ \Gamma , \tmty{x}{A \with B} }$}
  \end{prooftree*}}
\providecommand{\cpInfNil}{%
  \text{(no rule for \ty{\nil})}}
\providecommand{\cpInfTop}{%
  \begin{prooftree*}
    \AXC{}
    \SYM{(\top)}
    \UIC{$\seq[ \cpAbsurd{x} ]{ \Gamma, \tmty{x}{\top} }$}
  \end{prooftree*}}
\providecommand{\dhcp}{HCP\xspace}
\providecommand{\hcp}{HCP$^{{-}}$\xspace}
\providecommand{\hsep}{\ensuremath{\mid}}
\providecommand{\emptyhypercontext}{\varnothing}
\providecommand{\hcpEquivMixComm}{\ensuremath{({\ppar}\text{-comm})}\xspace}
\providecommand{\hcpEquivMixAssNoParen}[1]{\ensuremath{{\ppar}\text{-assoc}}\xspace}
\providecommand{\hcpEquivMixAss}[1]{\ensuremath{(\hcpEquivMixAssNoParen{#1})}\xspace}
\providecommand{\hcpEquivMixHaltNoParen}[1]{\ensuremath{\text{halt}\xspace}}
\providecommand{\hcpEquivMixHalt}[1]{\ensuremath{(\hcpEquivMixHaltNoParen{#1})}\xspace}
\providecommand{\hcpEquivMixCut}[1]{\ensuremath{(\text{scope-ext})}\xspace}
\providecommand{\hcp}{HCP\xspace}
\providecommand{\hcpEquivMixComm}{%
  \ensuremath{({\ppar}\text{-comm})}\xspace}
\providecommand{\hcpEquivMixAssNoParen}[1]{%
  \ensuremath{{\ppar}\text{-assoc}}\xspace}
\providecommand{\hcpEquivMixAss}[1]{%
  \ensuremath{(\hcpEquivMixAssNoParen{#1})}\xspace}
\providecommand{\hcpEquivScopeExtNoParen}[1]{%
  \ensuremath{\text{scope-ext}\xspace}}
\providecommand{\hcpEquivScopeExt}[1]{%
  \ensuremath{(\hcpEquivScopeExtNoParen{#1})}\xspace}
\providecommand{\hcpEquivMixHaltNoParen}[1]{%
  \hcpEquivMixHaltNoParen{#1}}
\providecommand{\hcpEquivMixHalt}[1]{%
  \hcpEquivMixHalt{#1}}
\providecommand{\hcpEquivNewComm}{%
  \ensuremath{(\nu\text{-comm})}\xspace}
\providecommand{\hcpRedAxCut}[1]{%
  \ensuremath{(\cpLink{}{})}\xspace}
\providecommand{\hcpRedBetaTensParr}{%
  \ensuremath{(\beta{\tens}{\parr})}\xspace}
\providecommand{\hcpRedBetaOneBot}{%
  \ensuremath{(\beta{\one}{\bot})}\xspace}
\providecommand{\hcpRedBetaPlusWith}[1]{%
  \ensuremath{(\beta{\plus}{\with}_{#1})}\xspace}
\providecommand{\hcpRedGammaNew}{%
  \ensuremath{(\gamma{\nu})}\xspace}
\providecommand{\hcpRedGammaMix}{%
  \ensuremath{(\gamma{\ppar})}\xspace}
\providecommand{\hcpRedGammaEquiv}{%
  \ensuremath{(\gamma{\equiv})}\xspace}
\providecommand{\hcpInfAx}{\cpInfAx}
\providecommand{\hcpInfCut}{%
  \begin{prooftree*}
    \AXC{$\seq[{ P }]{
        \mathcal{G}\hsep
        \Gamma, \tmty{x}{A} \hsep
        \Delta, \tmty{x}{A^\bot}
      }$}
    \NOM{Cut}
    \UIC{$\seq[{ \piNew{x}{P} }]{
        \mathcal{G}\hsep
        \ty{\Gamma}, \ty{\Delta}
      }$}
  \end{prooftree*}}
\providecommand{\hcpInfMix}{%
  \begin{prooftree*}
    \AXC{$\seq[ P ]{\mathcal{G} }$}
    \AXC{$\seq[ Q ]{\mathcal{H} }$}
    \NOM{H-Mix}
    \BIC{$\seq[ \piPar{P}{Q} ]{
        \mathcal{G} \hsep \mathcal{H} }$}
  \end{prooftree*}}
\providecommand{\hcpInfHalt}{%
  \begin{prooftree*}
    \AXC{$\vphantom{\seq[ Q ]{ \Delta, \tmty{y}{A^\bot} }}$}
    \NOM{H-Mix$_0$}
    \UIC{$\seq[{ \piHalt }]{ \emptyhypercontext }$}
  \end{prooftree*}}
\providecommand{\hcpInfBoundTens}{%
  \begin{prooftree*}
    \AXC{$\seq[{ P }]{
        \mathcal{G} \hsep
        \ty{\Gamma}, \tmty{y}{A} \hsep
        \ty{\Delta}, \tmty{x}{B}
      }$}
    \SYM{\tens}
    \UIC{$\seq[{ \piBoundSend{x}{y}{P} }]{
        \mathcal{G} \hsep
        \ty{\Gamma}, \ty{\Delta}, \tmty{x}{A \tens B}
      }$}
  \end{prooftree*}}
\providecommand{\hcpInfParr}{%
  \begin{prooftree*}
    \AXC{$\seq[ P ]{%
        \mathcal{G} \hsep
        \Gamma , \tmty{y}{A} , \tmty{x}{B} }$}
    \SYM{(\parr)}
    \UIC{$\seq[ \cpRecv{x}{y}{P} ]{
        \mathcal{G} \hsep
        \Gamma , \tmty{x}{A \parr B} }$}
  \end{prooftree*}}
\providecommand{\hcpInfOne}{%
  \begin{prooftree*}
    \AXC{$\seq[{ P }]{\mathcal{G}}$}
    \SYM{\one}
    \UIC{$\seq[{ \piBoundSend{x}{}{P} }]{
        \mathcal{G} \hsep \tmty{x}{\one}
      }$}
  \end{prooftree*}}
\providecommand{\hcpInfBot}{%
  \begin{prooftree*}
    \AXC{$\seq[ P ]{
        \mathcal{G} \hsep \Gamma }$}
    \SYM{(\bot)}
    \UIC{$\seq[ \cpWait{x}{P} ]{
        \mathcal{G} \hsep \Gamma , \tmty{x}{\bot} }$}
  \end{prooftree*}}
\providecommand{\hcpInfPlus}[1]{%
  \ifdim#1pt=1pt
  \begin{prooftree*}
    \AXC{$\seq[ P ]{
        \mathcal{G} \hsep
        \Gamma , \tmty{x}{A} }$}
    \SYM{(\plus_1)}
    \UIC{$\seq[{ \cpInl{x}{P} }]{
        \mathcal{G} \hsep
        \Gamma , \tmty{x}{A \plus B} }$}
  \end{prooftree*}
  \else%
  \ifdim#1pt=2pt
  \begin{prooftree*}
    \AXC{$\seq[ P ]{
        \mathcal{G} \hsep
        \Gamma , \tmty{x}{B} }$}
    \SYM{(\plus_2)}
    \UIC{$\seq[ \cpInr{x}{P} ]{
        \mathcal{G} \hsep
        \Gamma , \tmty{x}{A \plus B} }$}
  \end{prooftree*}
  \else%
  \fi%
  \fi%
}
\providecommand{\hcpInfWith}{%
  \begin{prooftree*}
    \AXC{$\seq[ P ]{
        \Gamma , \tmty{x}{A} }$}
    \AXC{$\seq[ Q ]{
        \Gamma, \tmty{x}{B} }$}
    \SYM{(\with)}
    \BIC{$\seq[ \cpCase{x}{P}{Q} ]{
        \Gamma , \tmty{x}{A \with B} }$}
  \end{prooftree*}}
\providecommand{\hcpInfNil}{\cpInfNil}
\providecommand{\hcpInfTop}{%
  \begin{prooftree*}
    \AXC{}
    \SYM{(\top)}
    \UIC{$\seq[ \cpAbsurd{x} ]{ \Gamma, \tmty{x}{\top} }$}
  \end{prooftree*}}
\title{Taking Linear Logic Apart}
\author{%
  Wen Kokke
  \institute{University of Edinburgh\\ Edinburgh, Scotland}
  \email{wen.kokke@ed.ac.uk}
  \and
  Fabrizio Montesi
  \institute{University of Southern Denmark\\ Odense, Denmark}
  \email{fmontesi@imada.sdu.dk}
  \and
  Marco Peressotti
  \institute{University of Southern Denmark\\ Odense, Denmark}
  \email{peressotti@imada.sdu.dk}}
\begin{document}
\maketitle

\begin{abstract}
  Process calculi based on logic, such as \piDILL and \cp, provide a foundation for deadlock-free concurrent programming. However, in previous work, there is a mismatch between the rules for constructing proofs and the term constructors of the \textpi-calculus: the fundamental operator for parallel composition does not correspond to any rule of linear logic.

  Kokke \etal~\cite{kokke2019} introduced Hypersequent Classical Processes (\dhcp), which addresses this mismatch using hypersequents (collections of sequents) to register parallelism in the typing judgements.
  However, the step from \cp to \dhcp is a big one. As of yet, \dhcp does not have reduction semantics, and the addition of delayed actions means that \cp processes interpreted as \dhcp processes do not behave as they would in \cp.

  We introduce \hcp, a variant of \dhcp with reduction semantics and without delayed actions. We prove progress, preservation, and termination, and show that \hcp supports the same communication protocols as \cp.
\end{abstract}

\section{Introduction}
\label{sec:introduction}

Classical Processes (\cp) \cite{wadler2012} is a process calculus inspired by the correspondence between the session-typed \textpi-calculus and linear logic \cite{caires2010}, where processes correspond to proofs, session types (communication protocols) to propositions, and communication to cut elimination. This correspondence allows for exchanging methods between the two fields. For example, the proof theory of linear logic can be used to guarantee progress for processes \cite{caires2010,wadler2012}.

The main attraction of \cp is that its semantics are \emph{prescribed} by the cut elimination procedure of Classical Linear Logic (CLL). This permits us to reuse the metatheory of linear logic ``as is'' to reason about the behaviour of processes. However, there is a mismatch between the structure of the proof terms of CLL and the term constructs of the standard \textpi-calculus~\cite{milner1992a,milner1992b}. For instance, the term for output of a linear name is $\tm{\cpSend{x}{y}{P}{Q}}$, which is read ``send $y$ over $x$ and proceed as $P$ in parallel to $Q$''. Note that this is a single term constructor, which takes all four arguments at the same time. This is caused by directly adopting the $(\tens)$-rule from CLL as the process calculus construct for sending: the $(\tens)$-rule has two premises (corresponding to $\tm P$ and $\tm Q$ in the output term), and checks that they share no resources (in the output term, $\tm y$ can be used only by $\tm P$, and $\tm x$ can be used only by $\tm Q$).

There is no independent parallel term $\tm{(\piPar{P}{Q})}$ in the grammar of \cp terms. Instead, parallel composition shows up in any term which corresponds to a typing rule which splits the context. Even if we were to add an independent parallel composition via the \textsc{Mix}-rule, as suggested in the original presentation of \cp~\cite{wadler2012}, there would be no way to allow the composed process $\tm{P}$ and $\tm{Q}$ to communicate as in the standard \textpi-calculus, as there is no independent name restriction either! Instead, synchronisation is governed by the ``cut'' operator $\tm{\piNew{x}{(\piPar{P}{Q})}}$, which composes $\tm{P}$ and $\tm{Q}$, enabling them to communicate along $\tm{x}$. Worse, if we naively add an independent parallel composition as well as a name restriction, using the rules shown below, we lose cut elimination, and therefore deadlock-freedom!
\[
  \begin{prooftree*}
    \AXC{$\seq[ P ]{ \Gamma }$}
    \AXC{$\seq[ Q ]{ \Delta }$}
    \NOM{Mix}
    \BIC{$\seq[ \piPar{P}{Q} ]{ \Gamma , \Delta }$}
  \end{prooftree*}
  \qquad
  \begin{prooftree*}
    \AXC{$\seq[ P ]{ \Gamma, \tmty{x}{A}, \tmty{y}{A^\bot} }$}
    \NOM{``Cut''}
    \UIC{$\seq[ \piNew{xy}{P} ]{ \Gamma }$}
  \end{prooftree*}
\]

This syntactic mismatch has an effect on the semantics as well. For instance, the $\beta$-reduction for output and input in \cp is \(\reducesto {\cpCut{x}{\cpSend{x}{y}{P}{Q}}{\cpRecv{x}{y}{R}}} {\cpCut{y}{P}{\cpCut{x}{Q}{R}}}\). Here, the parallel composition $\tm{(\piPar{P}{Q})}$ is of no relevance to this communication, yet the rule needs to inspect it to be able to nest the name restrictions appropriately in the resulting term.

Kokke \etal~\cite{kokke2019} introduced Hypersequent Classical Processes (\dhcp), which addresses this mismatch. The key insight is to register parallelism in the typing judgements using hypersequents~\cite{avron1991}, a technique from logic which generalises judgements from one sequent to many. This allows us to take apart the term constructs used in Classical Processes (\cp) to more closely match those of the standard \textpi-calculus. \dhcp has labelled transition semantics with delayed actions~\cite{merro2004} and a full abstraction result: bisimilarity, denotational equivalence, and barbed congruence coincide.

However, the step from \cp to \dhcp is a big one. As of yet, \dhcp does not have reduction semantics, and the addition of delayed actions means that \cp processes interpreted as \dhcp processes do not behave as they would in \cp.

In this paper, we address these issues by introducing \hcp, a variant of \dhcp with reduction semantics and without delayed actions. We proceed as follows. We start by introducing \cp (\cref{sec:cp}). Then, we introduce our variant of \hcp and prove it enjoys subject reduction and progress (\cref{sec:hcp}). We prove that every \cp process is an \hcp process, relate processes in \hcp back to \cp, and prove that \hcp supports the same communication protocols as \cp (\cref{sec:cp2hcp}). Finally, we discuss related work (\cref{sec:related-work}).

\section{Classical Processes}
\label{sec:cp}

In this section, we introduce \cp. In order to keep the discussion of \hcp in \cref{sec:hcp} simple, we restrict ourselves to the multiplicative-additive subset of \cp. We foresee no problems in extending the proofs in \cref{sec:hcp} to cover the remaining features of \cp (polymophism and the exponentials).

\subsection{Terms}
The term language of \cp is a variant of the \textpi-calculus. The variables $\tm{x}$, $\tm{y}$, and $\tm{z}$ range over channel names. The construct $\tm{\cpLink{x}{y}}$ links two channels~\cite{sangiorgi1996,boreale1998}, forwarding messages received on $\tm{x}$ to $\tm{y}$ and vice versa. The construct $\tm{\cpCut{x}{P}{Q}}$ creates a new channel $\tm{x}$, and composes two processes, $\tm{P}$ and $\tm{Q}$, which communicate on $\tm{x}$, in parallel. Therefore, in $\tm{\cpCut{x}{P}{Q}}$ the name $\tm{x}$ is bound in both $\tm{P}$ and $\tm{Q}$. In $\tm{\cpRecv{x}{y}{P}}$ and $\tm{\cpSend{x}{y}{P}{Q}}$, round brackets denote input, square brackets denote output. \cp uses bound output~\cite{sangiorgi1996}, meaning that both input and output bind a new name. In $\tm{\cpRecv{x}{y}{P}}$ the new name $\tm{y}$ is bound in $\tm{P}$. In $\tm{\cpSend{x}{y}{P}{Q}}$, the new name $\tm{y}$ is only bound in $\tm{P}$, while $\tm{x}$ is only bound in $\tm{Q}$.
\begin{definition}[Terms]\label{def:cp-terms}
  Process terms are given by the following grammar:
  \[
    \begin{array}{rllrll}
      \tm{P}, \tm{Q}, \tm{R}
           \Coloneqq & \tm{\cpLink{x}{y}}       &\text{link}
      &  \mid& \tm{\cpCut{x}{P}{Q}}     &\text{parallel composition, ``cut''}
      \\ \mid& \tm{\cpSend{x}{y}{P}{Q}} &\text{output}
      &  \mid& \tm{\cpRecv{x}{y}{P}}    &\text{input}
      \\ \mid& \tm{\cpHalt{x}}          &\text{halt}
      &  \mid& \tm{\cpWait{x}{P}}       &\text{wait}
      \\ \mid& \tm{\cpInl{x}{P}}        &\text{select left choice}
      &  \mid& \tm{\cpInr{x}{P}}        &\text{select right choice}
      \\ \mid& \tm{\cpCase{x}{P}{Q}}    &\text{offer binary choice}
      &  \mid& \tm{\cpAbsurd{x}}        &\text{offer nullary choice}
    \end{array}
  \]
\end{definition}\noindent
Terms in \cp are identified up to structural congruence, which states that links are symmetric, and parallel compositions $\tm{\cpCut{x}{P}{Q}}$ are associative and commutative.
\begin{definition}[Structural congruence]\label{def:cp-equiv}
  The structural congruence $\equiv$ is the congruence closure over terms which
  satisfies the following additional axioms:
  \[
    \setlength{\arraycolsep}{3pt}
    \begin{array}{llcll}
      \cpEquivLinkComm
      & \tm{\cpLink{x}{y}}
      & \equiv
      & \tm{\cpLink{y}{x}}
      \\
      \cpEquivCutComm
      & \tm{\cpCut{x}{P}{Q}}
      & \equiv
      & \tm{\cpCut{x}{Q}{P}}
      \\
      \cpEquivCutAss1
      & \tm{\cpCut{x}{P}{\cpCut{y}{Q}{R}}}
      & \equiv
      & \tm{\cpCut{y}{\cpCut{x}{P}{Q}}{R}}
      & \text{if }\notFreeIn{x}{R}\text{ and }\notFreeIn{y}{P}
    \end{array}
  \]
\end{definition}\noindent
The reduction semantics presented here are a variant of those presented by
Lindley and Morris~\cite{lindley2015}, who showed that reduction in \cp can be decomposed in two phases: one in which all \textbeta-reduction happens, and one in which an arbitrary action, blocked on an external communication, is moved to the top of the term using commuting conversions. We choose to stop after the first phase, and do away with the commuting conversions.

Reductions relate processes with their reduced forms \eg a reduction $\reducesto{P}{Q}$ denotes that the process $\tm{P}$ can reduce to the process $\tm{Q}$ in a single step.
\begin{definition}[Reduction]\label{def:cp-reduction}
  Reductions are described by the smallest relation $\Longrightarrow$ on process terms closed under rules below.
  \begin{gather*}
      \begin{array}{llcll}
        \cpRedAxCut1
        & \tm{\cpCut{x}{\cpLink{w}{x}}{P}}
        & \Longrightarrow
        & \tm{\cpSub{w}{x}{P}}
        \\
        \cpRedBetaTensParr
        & \tm{\cpCut{x}{\cpSend{x}{y}{P}{Q}}{\cpRecv{x}{y}{R}}}
        & \Longrightarrow
        & \tm{\cpCut{y}{P}{\cpCut{x}{Q}{R}}}
        \\
        \cpRedBetaOneBot
        & \tm{\cpCut{x}{\cpHalt{x}}{\cpWait{x}{P}}}
        & \Longrightarrow
        & \tm{P}
        \\
        \cpRedBetaPlusWith1
        & \tm{\cpCut{x}{\cpInl{x}{P}}{\cpCase{x}{Q}{R}}}
        & \Longrightarrow
        & \tm{\cpCut{x}{P}{Q}}
        \\
        \cpRedBetaPlusWith2
        & \tm{\cpCut{x}{\cpInr{x}{P}}{\cpCase{x}{Q}{R}}}
        & \Longrightarrow
        & \tm{\cpCut{x}{P}{R}}
      \end{array}
      \\[1\baselineskip]
      \begin{prooftree*}
        \AXC{$\reducesto{P}{P^\prime}$}
        \SYM{\cpRedGammaCut}
        \UIC{$\reducesto{\cpCut{x}{P}{Q}}{\cpCut{x}{P^\prime}{Q}}$}
      \end{prooftree*}
      \qquad
      \begin{prooftree*}
        \AXC{$\tm{P}\equiv\tm{Q}$}
        \AXC{$\reducesto{Q}{Q^\prime}$}
        \AXC{$\tm{Q^\prime}\equiv\tm{P^\prime}$}
        \SYM{\cpRedGammaEquiv}
        \TIC{$\reducesto{P}{P^\prime}$}
      \end{prooftree*}
    \end{gather*}
  Relations $\Longrightarrow^{+}$ and $\Longrightarrow^\star$ are the transitive, and the reflexive, transitive closures of $\Longrightarrow$, respectively.
\end{definition}\noindent

Note that we do not need to add a side condition to $\cpRedBetaTensParr$ to restrict its usage to the case where $\tm{y}$ is bound in $\tm{P}$ and $\tm{x}$ is bound in $\tm{Q}$, as this is required by the definition of the send construct $\tm{\cpSend{x}{y}{P}{Q}}$.

\subsection{Types}
Channels in \cp are typed using a session type system which corresponds to classical linear logic.
\begin{definition}[Types]\label{def:cp-types}
  \[
    \begin{array}{rclrcl}
      \ty{A}, \ty{B}, \ty{C}
           \Coloneqq & \ty{A \tens B} &\text{pair of independent processes}
      &  \mid& \ty{\one}      &\text{unit for} \; {\tens}
      \\ \mid& \ty{A \parr B} &\text{pair of interdependent processes}
      &  \mid& \ty{\bot}      &\text{unit for} \; {\parr}
      \\ \mid& \ty{A \plus B} &\text{internal choice}
      &  \mid& \ty{\nil}      &\text{unit for} \; {\plus}
      \\ \mid& \ty{A \with B} &\text{external choice}
      &  \mid& \ty{\top}      &\text{unit for} \; {\with}
    \end{array}
  \]
\end{definition}\noindent

A channel of type \ty{A \tens B} represents a pair of channels, which communicate with two independent processes---that is to say, two processes who share no channels. A process acting on a channel of type \ty{A \tens B} will send one endpoint of a fresh channel, and then split into a pair of independent processes. One of these processes will be responsible for an interaction of type \ty{A} over the fresh channel, while the other process continues to interact as \ty{B}.

A channel of type \ty{A \parr B} represents a pair of interdependent channels, which are used within a single process. A process acting on a channel of type \ty{A \parr B} will receive a channel to act on, and communicate on its channels in whatever order it pleases. This means that the usage of one channel can depend on that of another---\eg the interaction of type \ty{B} could depend on the result of the interaction of type \ty{A}, or vise versa, and if \ty{A} and \ty{B} are complex types, their interactions could likewise interweave in complex ways.

A process acting on a channel of type \ty{A \plus B} either sends the value \tm{\text{inl}} to select an interaction of type \ty{A} or the value \tm{\text{inr}} to select one of type \ty{B}. 
A process acting on a channel of type \ty{A \with B} receives such a value, and then offers an interaction of either type \ty{A} or \ty{B}, correspondingly.

Duality plays a crucial role in both linear logic and session types. In \cp, the two endpoints of a channel are assigned dual types. This ensures that, for instance, whenever a process \emph{sends} across a channel, the process on the other end of that channel is waiting to \emph{receive}. Each type $\ty{A}$ has a dual, written $\ty{A^\bot}$. Duality is an involutive function \ie $\ty{(A^\bot)^\bot} = \ty{A}$.
\begin{definition}[Duality]\label{def:cp-negation}
  \[
    \setlength{\arraycolsep}{3pt}
    \begin{array}{lclclclclclclcl}
               \ty{(A \tens B)^\bot} & = & \ty{A^\bot \parr B^\bot}
      &\qquad& \ty{\one^\bot}        & = & \ty{\bot}
      &\qquad& \ty{(A \parr B)^\bot} & = & \ty{A^\bot \tens B^\bot}
      &\qquad& \ty{\bot^\bot}        & = & \ty{\one}
      \\       \ty{(A \plus B)^\bot} & = & \ty{A^\bot \with B^\bot}
      &\qquad& \ty{\nil^\bot}        & = & \ty{\top}
      &\qquad& \ty{(A \with B)^\bot} & = & \ty{A^\bot \plus B^\bot}
      &\qquad& \ty{\top^\bot}        & = & \ty{\nil}
    \end{array}
  \]
\end{definition}\noindent
An environment associates channels with types. Names in environments must be unique, and two environments $\ty{\Gamma}$ and $\ty{\Delta}$ can only be combined as $\ty{\Gamma, \Delta}$ if $\ty{\fv{\Gamma}} \cap \ty{\fv{\Delta}} = \varnothing$.
\begin{definition}[Environments]\label{def:cp-environments}
  \(
    \ty{\Gamma}, \ty{\Delta}, \ty{\Theta} \Coloneqq \ty{\emptycontext} \mid \ty{\Gamma}, \tmty{x}{A}
  \)
\end{definition}\noindent
A typing judgement associates a process with collections of typed channels.
\begin{definition}[Typing judgements]\label{def:cp}
  A typing judgement $\seq[{ P }]{\tmty{x_1}{A_1}, \dots, \tmty{x_n}{A_n}}$
  denotes that the process \tm{P} communicates along channels $\tm{x_1}$, \dots,
  $\tm{x_n}$ following protocols $\ty{A_1}$, \dots, $\ty{A_n}$. Typing
  judgements are derived using rules below.
  \\[1\baselineskip]
  {Structural rules}
  \begin{center}
    \cpInfAx \cpInfCut
  \end{center}
	{Logical rules}
  \begin{center}
    \cpInfTens \cpInfParr
  \end{center}
  \begin{center}
    \cpInfBot
    \cpInfOne
  \end{center}
  \begin{center}
    \cpInfPlus1
    \cpInfPlus2
  \end{center}
  \begin{center}
    \cpInfWith
  \end{center}
  \begin{center}
    \cpInfNil
    \cpInfTop   
  \end{center}
\end{definition}

\subsection{Metatheory}
\cp enjoys subject reduction, termination, and progress~\cite{wadler2012,lindley2015}.
\begin{lemma}[Preservation for $\equiv$]\label{lem:cp-preservation-equiv}
  If $\tm{P}\equiv\tm{Q}$, then $\seq[P]{\Gamma}$ iff $\seq[Q]{\Gamma}$.
\end{lemma}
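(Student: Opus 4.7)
The plan is to prove both directions simultaneously by induction on the derivation of $\tm{P}\equiv\tm{Q}$. Since $\equiv$ is defined as the least congruence satisfying the three axioms (\cpEquivLinkComm), (\cpEquivCutComm), and \cpEquivCutAss{1}, the inductive cases for reflexivity, symmetry, transitivity, and for each process constructor are routine: one applies the inductive hypothesis to the sub-derivations and reassembles the typing derivation using the same typing rule. I will therefore focus the write-up on the three axioms, and treat the congruence closure cases in a single sentence.

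For \cpEquivLinkComm, the only rule that types $\tm{\cpLink{x}{y}}$ is the axiom \textsc{Ax}, which yields $\seq[\cpLink{x}{y}]{\tmty{x}{A}, \tmty{y}{A^\bot}}$; applying \textsc{Ax} to $\tmty{y}{A^\bot}, \tmty{x}{A}$ with duality $\ty{(A^\bot)^\bot}=\ty{A}$ gives the derivation for $\tm{\cpLink{y}{x}}$, and environments are identified up to reordering. For \cpEquivCutComm, the only rule that types $\tm{\cpCut{x}{P}{Q}}$ is \textsc{Cut}; swapping the two premises and using $\ty{(A^\bot)^\bot}=\ty{A}$ gives a \textsc{Cut} derivation of $\tm{\cpCut{x}{Q}{P}}$ with the same conclusion.

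The case for \cpEquivCutAss{1} is the real work. Given a derivation of $\seq[\cpCut{x}{P}{\cpCut{y}{Q}{R}}]{\Gamma,\Delta,\Theta}$, inversion on the outer \textsc{Cut} yields $\seq[P]{\Gamma,\tmty{x}{A}}$ and a derivation of $\seq[\cpCut{y}{Q}{R}]{\Delta,\Theta,\tmty{x}{A^\bot}}$; a second inversion (together with the side condition $\notFreeIn{x}{R}$, which forces $\tmty{x}{A^\bot}$ to sit in $\tm{Q}$'s premise rather than $\tm{R}$'s, up to the already established commutativity) produces $\seq[Q]{\Delta,\tmty{x}{A^\bot},\tmty{y}{B}}$ and $\seq[R]{\Theta,\tmty{y}{B^\bot}}$. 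The side condition $\notFreeIn{y}{P}$ then ensures that the cut on $\tm{x}$ can be re-formed between $\tm{P}$ and $\tm{Q}$ without binding any free $\tm{y}$ in $\tm{P}$, giving $\seq[\cpCut{x}{P}{Q}]{\Gamma,\Delta,\tmty{y}{B}}$; one final \textsc{Cut} on $\tm{y}$ with $\tm{R}$ yields $\seq[\cpCut{y}{\cpCut{x}{P}{Q}}{R}]{\Gamma,\Delta,\Theta}$. The reverse direction is symmetric.

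The main obstacle is the associativity case: one has to keep track of which variable ends up in which premise after the two inversions, and use the side conditions precisely at the points where the environments are repartitioned. Everything else is bookkeeping and an appeal to the fact that each syntactic constructor is typed by exactly one rule, so inversion on typing is deterministic. The congruence-closure cases follow by applying the inductive hypothesis to immediate sub-derivations and re-applying the typing rule for the surrounding constructor; because each typing rule's premises are in bijection with the immediate subterms, this goes through uniformly.
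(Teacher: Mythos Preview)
Your proposal is correct and follows exactly the approach the paper takes: induction on the derivation of $\tm{P}\equiv\tm{Q}$. The paper's proof is the one-liner ``By induction on the derivation of $\tm{P}\equiv\tm{Q}$'', and you have simply spelled out the axiom cases (in particular the bookkeeping for \cpEquivCutAss{1}) that the paper leaves implicit.
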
 
\begin{proof}
  By induction on the derivation of $\tm{P}\equiv\tm{Q}$.
\end{proof}
\begin{theorem}[Preservation]\label{thm:cp-preservation}
  If $\seq[P]{\Gamma}$ and $\reducesto{P}{Q}$, then $\seq[Q]{\Gamma}$.
\end{theorem}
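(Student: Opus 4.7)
The plan is to proceed by induction on the derivation of the reduction $\reducesto{P}{Q}$, doing a case analysis on the last rule used. For each principal $\beta$-reduction the strategy is uniform: invert the typing derivation $\seq[P]{\Gamma}$ to expose the typing derivations of the subterms involved, and then reassemble these into a typing derivation for $\tm{Q}$ using the same or fewer structural rules. The two congruence-style rules \cpRedGammaCut and \cpRedGammaEquiv are handled by the induction hypothesis, with the latter relying on \cref{lem:cp-preservation-equiv} applied on both sides of the inner reduction.

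Before the main induction I would state and prove a substitution lemma: if $\seq[P]{\Gamma, \tmty{x}{A}}$ and $\tm{w}$ is fresh for $\tm{P}$, then $\seq[\cpSub{w}{x}{P}]{\Gamma, \tmty{w}{A}}$. This is proven by a routine induction on the derivation of $\seq[P]{\Gamma,\tmty{x}{A}}$, renaming bound names as needed to avoid capture. With this in hand, the axiom-cut case \cpRedAxCut1 follows: inversion of the outer \textsc{Cut} gives $\seq[\cpLink{w}{x}]{\tmty{w}{A^\bot}, \tmty{x}{A}}$ and $\seq[P]{\Gamma, \tmty{x}{A^\bot}}$, and applying the substitution lemma yields $\seq[\cpSub{w}{x}{P}]{\Gamma, \tmty{w}{A^\bot}}$, which (after possibly one use of symmetry) is the required judgement.

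The $\beta\tens\parr$ case is the most involved of the $\beta$-steps. Inverting the \textsc{Cut} on the redex, then the $(\tens)$-rule on the left premise and the $(\parr)$-rule on the right premise, produces $\seq[P]{\Gamma_1, \tmty{y}{A}}$, $\seq[Q]{\Gamma_2, \tmty{x}{B}}$, and $\seq[R]{\Delta, \tmty{y}{A^\bot}, \tmty{x}{B^\bot}}$. Two applications of \textsc{Cut}, first on $\tm{x}$ between $\tm{Q}$ and $\tm{R}$, and then on $\tm{y}$ between $\tm{P}$ and the result, deliver exactly $\seq[\cpCut{y}{P}{\cpCut{x}{Q}{R}}]{\Gamma_1, \Gamma_2, \Delta}$; the freshness conditions needed to split the context are guaranteed by the binding discipline of the send construct. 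The cases $\beta\one\bot$ and $\beta\plus\with_i$ are similar but strictly easier: inversion yields immediately the subderivations required to construct the reduct's typing.

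The one point demanding a little care is the \cpRedGammaEquiv case, where one has $\tm{P}\equiv\tm{Q}$, $\reducesto{Q}{Q'}$, and $\tm{Q'}\equiv\tm{P'}$. By \cref{lem:cp-preservation-equiv} $\seq[Q]{\Gamma}$ holds, the induction hypothesis then gives $\seq[Q']{\Gamma}$, and a second application of \cref{lem:cp-preservation-equiv} closes the case with $\seq[P']{\Gamma}$. I expect the substitution lemma, together with a careful bookkeeping of which context fragments are assigned to which premise in the $\beta\tens\parr$ case (where the associativity side-conditions of $\equiv$ implicitly help), to be the only non-mechanical parts of the argument.
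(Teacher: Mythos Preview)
Your proposal is correct and follows exactly the approach the paper takes: induction on the derivation of $\reducesto{P}{Q}$. The paper's proof is a one-line ``By induction on the derivation of $\reducesto{P}{Q}$'', and your plan simply spells out the case analysis (including the auxiliary renaming/substitution lemma and the appeal to \cref{lem:cp-preservation-equiv} for \cpRedGammaEquiv) that this induction entails.
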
 
\begin{proof}
  By induction on the derivation of $\reducesto{P}{Q}$.
\end{proof}
\begin{definition}[Actions]
  A process $\tm{P}$ acts on $\tm{x}$ whenever $\tm{x}$ is free in the outermost
  term constructor of $\tm{P}$, \eg $\tm{\cpSend{x}{y}{P}{Q}}$ acts on $\tm{x}$
  but not on $\tm{y}$, and $\tm{\cpLink{x}{y}}$ acts on both $\tm{x}$ and $\tm{y}$.
  A process $\tm{P}$ is an action if it acts on some channel $\tm{x}$.
\end{definition}
\begin{definition}[Canonical forms]
  A process $\tm{P}$ is in canonical form if
  \[
    \tm{P} \equiv \tm{\cpCut{x_1}{P_1}{\dots\cpCut{x_n}{P_n}{P_{n+1}}\dots}},
  \]
  such that: no process $\tm{P_i}$ is a cut; no process $\tm{P_i}$ is a link acting on a bound channel $\tm{x_i}$; and no two processes $\tm{P_i}$ and $\tm{P_j}$ are acting on the same bound channel $\tm{x_i}$.
\end{definition}
\begin{corollary}
  If a process $\tm{P}$ is in canonical form, then it is blocked on an external communication.
\end{corollary}
\begin{proof}
  We have
  \[
  \tm{P} \equiv \tm{\cpCut{x_1}{P_1}{\dots\cpCut{x_n}{P_n}{P_{n+1}}\dots}}
  \]
  such that no $\tm{P_i}$ is a cut or a link, and no two processes $\tm{P_i}$ and $\tm{P_j}$ are acting on the same bound channel. The prefix of cuts introduces $n$ channels, and $n+1$ processes. Therefore, at least \emph{one} of the processes $\tm{P_i}$ must be acting on a free channel, i.e., blocked on an external communication.
\end{proof}
\begin{theorem}[Progress]\label{thm:cp-progress}
  If $\seq[P]{\Gamma}$, then either $\tm{P}$ is in canonical form, or there exists a process $\tm{Q}$ such that $\tm{P}\Longrightarrow\tm{Q}$.
\end{theorem}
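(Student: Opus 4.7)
The plan is to induct on the typing derivation. For every non-\textsc{Cut} rule the conclusion is a term whose outermost constructor is not $\tm{\cpCut{x}{\cdot}{\cdot}}$, so one may take $n = 0$ with $\tm{P_1} = \tm{P}$ in the canonical form: there are no bound channels $\tm{x_i}$ for the side conditions to constrain, and $\tm{P}$ is not a cut.

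The interesting case is \textsc{Cut}, where $\tm{P} = \tm{\cpCut{x}{P_0}{Q_0}}$. I would first apply the induction hypothesis to each subderivation. If $\tm{P_0}$ reduces, then $\tm{P}$ reduces by \cpRedGammaCut{}; if it is $\tm{Q_0}$ that reduces, the symmetric case follows using $\nu$-comm and \cpRedGammaEquiv{}. So assume both $\tm{P_0}$ and $\tm{Q_0}$ are in canonical form and write $\tm{P_0} \equiv \tm{\cpCut{x_1}{P_1}{\dots\cpCut{x_n}{P_n}{P_{n+1}}\dots}}$, and analogously $\tm{Q_0}$ with $\tm{Q_1}, \dots, \tm{Q_{m+1}}$ and $\tm{y_1}, \dots, \tm{y_m}$. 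After $\alpha$-renaming to freshen all bound names, repeated applications of $\nu$-assoc and $\nu$-comm flatten $\tm{P}$ into a single chain of cuts binding $\tm{x_1}, \dots, \tm{x_n}, \tm{y_1}, \dots, \tm{y_m}, \tm{x}$ over the processes $\tm{P_1}, \dots, \tm{P_{n+1}}, \tm{Q_1}, \dots, \tm{Q_{m+1}}$. By linearity, $\tm{x}$ is free in exactly one $\tm{P_{i_0}}$ and one $\tm{Q_{j_0}}$. If either is a link, that link must act on $\tm{x}$, and rearranging the chain so the link is cut directly against $\tm{x}$ lets \cpRedAxCut{} fire. Otherwise, if both $\tm{P_{i_0}}$ and $\tm{Q_{j_0}}$ act on $\tm{x}$, inversion on their typing together with duality of the types assigned to $\tm{x}$ forces their outermost constructors to be a complementary pair matching one of \cpRedBetaTensParr, \cpRedBetaOneBot, \cpRedBetaPlusWith{1}, or \cpRedBetaPlusWith{2}; the corresponding $\beta$-reduction fires, pushed through the enclosing cuts by \cpRedGammaEquiv{}. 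In the remaining case at least one of $\tm{P_{i_0}}, \tm{Q_{j_0}}$ does not act on $\tm{x}$, so at most one process in the chain acts on $\tm{x}$; the canonical-form conditions inherited from $\tm{P_0}$ and $\tm{Q_0}$ at the other bound names, together with this observation at $\tm{x}$, certify that $\tm{P}$ itself is in canonical form.

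The main obstacle will be the flattening step, which requires careful scope bookkeeping: one must check the side conditions of $\nu$-assoc at each rearrangement, use $\alpha$-conversion to keep bound names distinct, and ensure that $\tm{P_{i_0}}$ and $\tm{Q_{j_0}}$ can be brought adjacent to the cut on $\tm{x}$ even when further cuts intervene between them and $\tm{x}$. Once this scaffolding is in place the case analysis is routine, since each type admits a unique logical introduction rule and hence determines the outermost constructor of any typed action on a channel of that type.
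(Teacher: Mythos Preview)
Your argument is correct, but it takes a different route from the paper. The paper does not induct on the typing derivation; instead it directly takes the \emph{maximal} prefix of cuts of $\tm{P}$, so that $\tm{P} \equiv \tm{\cpCut{x_1}{P_1}{\dots\cpCut{x_n}{P_n}{P_{n+1}}\dots}}$ with no $\tm{P_i}$ a cut, and then performs the three-way case split in one shot: if some $\tm{P_i}$ is a link acting on a bound channel, fire \cpRedAxCut1; if two $\tm{P_i}$, $\tm{P_j}$ act on the same bound channel, fire the matching $\beta$-rule; otherwise the displayed form already witnesses canonicity. There is no distinguished ``new'' cut.

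Your inductive decomposition buys something: by assuming $\tm{P_0}$ and $\tm{Q_0}$ are already canonical you localise the only possible redex to the freshly introduced channel $\tm{x}$, so the case analysis collapses to inspecting just $\tm{P_{i_0}}$ and $\tm{Q_{j_0}}$. The price is the merge you flag as the main obstacle: you must splice two right-nested chains into one and argue that the inherited canonical-form conditions transfer---in particular that no $\tm{P_i}$ with $i \ne i_0$ can be a link (any such $\tm{P_i}$ already carries some bound $\tm{x_j}$ by the shape of the chain, so it would have violated $\tm{P_0}$'s canonicity), and symmetrically for the $\tm{Q_j}$. The paper's global flattening sidesteps this merge entirely, at the cost of leaving the $\equiv$-rearrangement needed to bring a link or a dual pair under its cut just as implicit as you do.
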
 
\begin{proof}
  We consider the maximum prefix of cuts of $\tm{P}$ such that $\tm{P} \equiv \tm{\cpCut{x_1}{P_1}{\dots\cpCut{x_n}{P_n}{P_{n+1}}\dots}}$ and no $\tm{P_i}$ is a cut. If any process $\tm{P_i}$ is a link, we reduce by $(\cpLink{}{})$. If any two processes $\tm{P_i}$ and $\tm{P_j}$ are acting on the same channel $\tm{x_i}$, we rewrite by $\equiv$ and reduce by the appropriate $\beta$-rule. Otherwise, $\tm{P}$ is in canonical form.
\end{proof}
\begin{theorem}[Termination]\label{thm:cp-termination}
  If $\seq[P]{\Gamma}$, then there are no infinite $\Longrightarrow$-reduction sequences.
\end{theorem}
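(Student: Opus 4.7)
The plan is to define a numerical measure on typed processes that strictly decreases with every reduction step, and then appeal to well-foundedness of $(\mathbb{N}, >)$. First I would set $|\ty{A \star B}| = |\ty{A}| + |\ty{B}| + 1$ for each binary connective $\star \in \{\tens,\parr,\plus,\with\}$, and $|\ty{\one}| = |\ty{\bot}| = |\ty{\nil}| = |\ty{\top}| = 1$. Note that $|\ty{A^\bot}| = |\ty{A}|$ and, since the grammar in \cref{def:cp-types} admits no propositional variables, $|\ty{A}| \geq 1$ for every $\ty{A}$. For a typed process $\seq[P]{\Gamma}$, let $\mu(\tm{P})$ be the sum of $|\ty{A}|$ taken over every subterm of the form $\tm{\cpCut{x}{-}{-}}$ occurring in $\tm{P}$, where $\ty{A}$ is the type at which $\tm{x}$ is cut (well defined thanks to \cref{thm:cp-preservation}).

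The argument then rests on three observations. First, $\tm{P} \equiv \tm{Q}$ implies $\mu(\tm{P}) = \mu(\tm{Q})$: each axiom of \cref{def:cp-equiv} only relabels or reassociates the same collection of cuts, so the sum is preserved, and the general case follows by induction on the derivation of $\equiv$. Second, each non-congruence reduction strictly decreases $\mu$. For $\cpRedAxCut1$, the outer cut on some $\ty{A}$ is eliminated while the substitution $\tm{P\{w/x\}}$ preserves the inner cuts, so $\mu$ drops by $|\ty{A}| \geq 1$. For $\cpRedBetaTensParr$, one cut of weight $|\ty{A}|+|\ty{B}|+1$ is traded for two cuts of weights $|\ty{A}|$ and $|\ty{B}|$, a net drop of $1$. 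For $\cpRedBetaOneBot$, the unique cut on $\ty{\one}$ vanishes. For $\cpRedBetaPlusWith1$ and $\cpRedBetaPlusWith2$, a cut of weight $|\ty{A}|+|\ty{B}|+1$ is replaced by a cut of weight $|\ty{A}|$ or $|\ty{B}|$, and the cuts inside the discarded branch are dropped as well, giving a strict decrease. Third, the congruence rules propagate strict monotonicity: the identity $\mu(\tm{\cpCut{x}{P}{Q}}) = \mu(\tm{P}) + \mu(\tm{Q}) + |\ty{A}|$ shows that $\cpRedGammaCut$ preserves the decrease, and $\cpRedGammaEquiv$ follows by combining this with the first observation.

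Stringing the three observations together, every $\Longrightarrow$-step strictly decreases a natural number, so no infinite reduction sequence can exist. The main obstacle I anticipate is the bookkeeping for $\equiv$-invariance, especially the associativity axiom $\cpEquivCutAss1$, which reparenthesises nested cuts and could plausibly disturb the measure; but since addition is associative and commutative while the collection of cut-subterms along with their types is unaffected by reassociation, the invariance is immediate by structural induction on the derivation of $\equiv$. If one later wants to extend this proof to cover exponentials, the sum $\mu$ should be upgraded to a multiset ordering to accommodate contraction-style duplications of cuts, but for the multiplicative-additive fragment treated here the plain sum suffices.
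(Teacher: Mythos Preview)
Your proposal is correct and follows essentially the same approach as the paper: both arguments use the size of the cut formula as the decreasing measure, observe that each $\beta$-step replaces a cut by zero, one, or two cuts on strictly smaller formulas, and check that structural congruence leaves the measure invariant. Your write-up is simply a more explicit version of the paper's three-sentence sketch, packaging the multiset-of-cut-sizes idea as a single natural-number sum; for the multiplicative--additive fragment this works because the $\tens$/$\parr$ case trades weight $|\ty A|+|\ty B|+1$ for $|\ty A|+|\ty B|$, exactly as you compute.
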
 
\begin{proof}
  Every reduction reduces a single cut to zero, one or two cuts. However, each of these cuts is smaller, measured in the size of the cut formula. Furthermore, each instance of the structural congruence preserves the size of the cut. Therefore, there cannot be an infinite $\Longrightarrow$-reduction sequence.
\end{proof}

\section{Hypersequent Classical Processes}
\label{sec:hcp}

In this section, we introduce our variant of Hypersequent Classical Processes (\hcp), itself a variant of \cp which registers parallelism in the typing judgements using hypersequents, allowing us to take apart the monolithic term constructors of \cp (\eg $\tm{\cpSend{x}{y}{P}{Q}}$) into the corresponding \textpi-calculus term constructs.

The crucial difference between \hcp as described here and \dhcp as described by Kokke \etal~\cite{kokke2019} is in the absence of delayed actions. However, removing delayed actions introduces self-locking processes, which we rule out using an extra restriction in the type system (see \cref{sec:hcp-types}). Furthermore, we follow \cp in using the same name for both endpoints of a channel, writing, \eg $\tm{\cpCut{x}{\cpHalt{x}}{\cpWait{x}{P}}}$ as opposed to $\tm{\cpCut{xy}{\cpHalt{x}}{\cpWait{y}{P}}}$.

\subsection{Terms}
The term language of \hcp is a variant of \cp where the term constructs have been taken apart into primitives which more closely resemble the \textpi-calculus primitives.
\begin{definition}[Terms]\label{def:hcp-terms}
  \[
    \begin{array}{rllrll}
      \tm{P}, \tm{Q}, \tm{R}
           \Coloneqq & \tm{\cpLink{x}{y}}         &\text{link}
      &  \mid& \tm{\piHalt}               &\text{terminated process}
      \\ \mid& \tm{\piNew{x}{P}}          &\text{name restriction, ``cut''}
      &  \mid& \tm{( \piPar{P}{Q} )}      &\text{parallel composition, ``mix''}
      \\ \mid& \tm{\piBoundSend{x}{y}{P}} &\text{output}
      &  \mid& \tm{\piRecv{x}{y}{P}}      &\text{input}
      \\ \mid& \tm{\piBoundSend{x}{}{P}}  &\text{halt}
      &  \mid& \tm{\cpWait{x}{}{P}}       &\text{wait}
      \\ \mid& \tm{\cpInl{x}{P}}          &\text{select left choice}
      &  \mid& \tm{\cpInr{x}{P}}          &\text{select right choice}
      \\ \mid& \tm{\cpCase{x}{P}{Q}}      &\text{offer binary choice}
      &  \mid& \tm{\cpAbsurd{x}}          &\text{offer nullary choice}
    \end{array}
  \]
\end{definition}\noindent
A pleasant effect of our updated syntax is that it makes our structural congruence much more standard: it has associativity, commutativity, and a unit for parallel composition, commutativity of name restrictions, and scope extrusion.
\begin{definition}[Structural congruence]\label{def:hcp-equiv}
  The structural congruence $\equiv$ is the congruence closure over terms which satisfies the following additional axioms:
  \[
    \setlength{\arraycolsep}{3pt}
    \begin{array}{llcllllcll}
        \cpEquivLinkComm
      & \tm{\cpLink{x}{y}}
      & \equiv
      & \tm{\cpLink{y}{x}}
      &
      &
        \hcpEquivMixHalt1
      & \tm{\piPar{P}{\piHalt}}
      & \equiv
      & \tm{P}
      &
      \\
        \hcpEquivMixComm
      & \tm{\piPar{P}{Q}}
      & \equiv
      & \tm{\piPar{Q}{P}}
      &
      &
        \hcpEquivMixAss1
      & \tm{\piPar{P}{( \piPar{Q}{R} )}}
      & \equiv
      & \tm{\piPar{( \piPar{P}{Q} )}{R}}
      &
      \\
        \hcpEquivNewComm
      & \tm{\piNew{x}{\piNew{y}{P}}}
      & \equiv
      & \tm{\piNew{y}{\piNew{x}{P}}}
      &
      &
        \hcpEquivScopeExt1
      & \tm{\piNew{x}{( \piPar{P}{Q} )}}
      & \equiv
      & \tm{\piPar{P}{\piNew{x}{Q}}}
      & \text{if }\notFreeIn{x}{P}
    \end{array}
  \]
\end{definition}\noindent
There are two changes to the reduction system. First, since $\tm{\piBoundSend{x}{y}{P}}$ and $\tm{\piBoundSend{x}{}{P}}$ are now terms in their own right, the $\cpRedBetaTensParr$ and $\cpRedBetaOneBot$ rules are simpler. Second, since we decomposed $\tm{\cpCut{x}{P}{Q}}$ into an independent name restriction and parallel composition, the relevant $\gamma$-rule all decompose as well.
\begin{definition}[Reduction]\label{def:hcp-reduction}
  Reductions are described by the smallest relation $\Longrightarrow$ on process
  terms closed under the rules below:
  \begin{gather*}
    \begin{array}{llcll}
      \hcpRedAxCut1
      & \tm{\cpCut{x}{\cpLink{w}{x}}{P}}
      & \Longrightarrow
      & \tm{\cpSub{w}{x}{P}}
      \\
      \hcpRedBetaTensParr
      & \tm{\cpCut{x}{\piBoundSend{x}{y}{P}}{\piRecv{x}{y}{R}}}
      & \Longrightarrow
      & \tm{\piNew{x}{\piNew{y}{(\piPar{P}{R})}}}
      \\
      \hcpRedBetaOneBot
      & \tm{\cpCut{x}{\piBoundSend{x}{}{P}}{\piRecv{x}{}{Q}}}
      & \Longrightarrow
      & \tm{\piPar{P}{Q}}
      \\
      \hcpRedBetaPlusWith1
      & \tm{\cpCut{x}{\cpInl{x}{P}}{\cpCase{x}{Q}{R}}}
      & \Longrightarrow
      & \tm{\cpCut{x}{P}{Q}}
      \\
      \hcpRedBetaPlusWith2
      & \tm{\cpCut{x}{\cpInr{x}{P}}{\cpCase{x}{Q}{R}}}
      & \Longrightarrow
      & \tm{\cpCut{x}{P}{R}}
    \end{array}
    \\[1\baselineskip]
    \begin{prooftree*}
      \AXC{$\reducesto{P}{P^\prime}$}
      \SYM{\hcpRedGammaNew}
      \UIC{$\reducesto{\piNew{x}{P}}{\piNew{x}{P^\prime}}$}
    \end{prooftree*}
    \qquad
    \begin{prooftree*}
      \AXC{$\reducesto{P}{P^\prime}$}
      \SYM{\hcpRedGammaMix}
      \UIC{$\reducesto{\piPar{P}{Q}}{\piPar{P^\prime}{Q}}$}
    \end{prooftree*}
    \qquad
    \begin{prooftree*}
      \AXC{$\tm{P}\equiv\tm{Q}$}
      \AXC{$\reducesto{Q}{Q^\prime}$}
      \AXC{$\tm{Q^\prime}\equiv\tm{P^\prime}$}
      \SYM{\hcpRedGammaEquiv}
      \TIC{$\reducesto{P}{P^\prime}$}
    \end{prooftree*}
  \end{gather*}  
  Relations $\Longrightarrow^{+}$ and $\Longrightarrow^\star$ are the transitive, and the reflexive, transitive closures of $\Longrightarrow$, respectively.
\end{definition}

\subsection{Types}\label{sec:hcp-types}
We use the same definitions for types and environments for \hcp as we used for \cp. However, we introduce a new layer on top of sequents: hypersequents. As \cp is a one-sided logic, and it uses the left-hand side of the turnstile to write the process, the traditional hypersequent notation can look confusing: ``$\seq[P]{\Gamma_1}\hsep\dots\hsep\seq{\Gamma_n}$'' seems to claim that $\tm{P}$ acts according to protocol $\ty{\Gamma_1}$. What are all the other $\Gamma$s doing there? Are they typing empty processes? Therefore, we opt to leave out the repeated turnstile, and instead work with the notion of ``hyper-environments''. However, we will still refer to our system as a hypersequent system. A hyper-environment is either empty, or consist of a series of typing environments, separated by vertical bars. A hyper-environment $\ty{\Gamma_1 \hsep \dots \hsep \Gamma_n}$ types a series of $n$ entangled, but independent processes.
\begin{definition}[Hyper-environments]\label{def:hcp-hyper-environment}
  $\ty{\mathcal{G}}, \ty{\mathcal{H}} \Coloneqq \ty{\emptyhypercontext} \, \mid \, \ty{\mathcal{G} \hsep \Gamma}$
\end{definition}\noindent
A hyper-environment is a multiset of environments. While names within environments must be unique, names may be shared between multiple environments in a hyper-environment. We write $\ty{\mathcal{G} \hsep \mathcal{H}}$ to combine two hyper-environments.

Typing judgements in \hcp associate processes with hyper-environments. \textsc{H-Mix} composes two processes in parallel, but remembers that they are independent in the sequent. \textsc{H-Cut} and $(\tens)$ take as their premise a process which consists of at least two independent processes, and connects them, eliminating the vertical bar. Each logical rule has the side condition that $\tm{x} \not\in \ty{\mathcal{G}}$, which can be read as ``you cannot act on one end-point of $x$ if you are also holding its other end-point''. This prevents self-locking processes, \eg $\tm{\piRecv{x}{}{\piBoundSend{x}{}{\piHalt}}}$.
\begin{definition}[Typing judgements]\label{def:hcp}
  A typing judgement $\seq[P]{\Gamma_1 \hsep \dots \hsep \Gamma_n}$ denotes that the process $\tm{P}$ consists of $n$ independent, but potentially entangled processes, each of which communicates according to its own protocol $\Gamma_i$. 
  Typing judgements can be constructed using the inference rules below. 
  \\[1\baselineskip]
  {Structural rules}
  \begin{center}
    \hcpInfAx
    \hcpInfCut
  \end{center}
  \begin{center}
    \hcpInfMix
    \hcpInfHalt
  \end{center}
  {Logical rules}
  \begin{center}
    \hcpInfBoundTens
    \hcpInfParr
  \end{center}
  \begin{center}
    \hcpInfOne
    \hcpInfBot
  \end{center}
  \begin{center}
    \hcpInfPlus1
    \hcpInfPlus2
  \end{center}
  \begin{center}
    \hcpInfWith
  \end{center}
  \begin{center}
    \hcpInfNil
    \hcpInfTop
  \end{center}
  Furthermore, each logical rule has the side condition that $\tm{x} \not\in \ty{\mathcal{G}}$.
\end{definition}\noindent

Note that the rules $(\with)$ and $(\top)$ disallow hyperenvironments. Allowing hyperenvironments in $(\with)$ would allow us to derive processes which are stuck. For instance, the following process would be well-typed, but is stuck---even in the presence of delayed actions:
\[
\seq[{
  \tm{\piNew{x^{\ty{\one \with \one}}}{\piNew{y^{\ty{\bot \plus \bot}}}{\left(
          \begin{array}{l}
            x \triangleright
            \left\{
            \begin{array}{l}
              \texttt{inl}: (\cpInl{y}{\cpWait{y}{\cpHalt{z}}} \ppar \cpHalt{x});
              \\
              \texttt{inr}: (\cpInr{y}{\cpWait{y}{\cpHalt{z}}} \ppar \cpHalt{x})
            \end{array}
            \right\}
            \ppar
            \\
            y \triangleright
            \left\{
            \begin{array}{l}
              \texttt{inl}: (\cpInl{x}{\cpWait{x}{\cpHalt{w}}} \ppar \cpHalt{y});
              \\
              \texttt{inr}: (\cpInr{x}{\cpWait{x}{\cpHalt{w}}} \ppar \cpHalt{y})
            \end{array}
            \right\}
          \end{array}
        \right)}}}
}]{
  \tmty{z}{\one} \hsep \tmty{w}{\one}
}
\]
Allowing hypersequents in $(\top)$ would lead to problems with \cref{lem:hcp-disentangle}. Intuitively, if we allowed the derivation $\seq[\cpAbsurd{x}]{\mathcal{G} \hsep \Gamma, \tmty{x}{\top}}$, we would claim that $\tm{\cpAbsurd{x}}$ ``consistst of $n$ independent, but potentially entangled processes'', which is clearly false.

\subsection{Metatheory}
\hcp enjoys subject reduction, termination, and progress.
\begin{lemma}[Preservation for $\equiv$]\label{lem:hcp-preservation-equiv}
  If $\tm{P}\equiv\tm{Q}$, then $\seq[P]{\mathcal{G}}$ iff $\seq[Q]{\mathcal{G}}$.
\end{lemma}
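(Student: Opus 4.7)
The plan is induction on the derivation of $\tm{P}\equiv\tm{Q}$. The reflexive, symmetric, and transitive closure cases are immediate (the last uses the induction hypothesis twice), and the congruence-under-context cases --- e.g.\ $\tm{\piNew{x}{P_1}}\equiv\tm{\piNew{x}{Q_1}}$ from $\tm{P_1}\equiv\tm{Q_1}$, or $\tm{\piPar{P_1}{Q_1}}\equiv\tm{\piPar{P_2}{Q_2}}$ from component congruences --- should be routine: invert the unique last rule deriving the outer constructor, apply the induction hypothesis to the premise(s), and re-apply the same rule. Side conditions like $\tm{x}\not\in\ty{\mathcal{G}}$ carry over unchanged, since the hyper-environment is unchanged. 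The real work is in proving closure under each of the six generating axioms, each in both directions.

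For \cpEquivLinkComm, both sides are derived directly by \textsc{Ax} with environments that coincide as multisets. For the three mix axioms \hcpEquivMixHalt1, \hcpEquivMixComm, and \hcpEquivMixAss1, the plan is to exploit the fact that hyper-environments are multisets combined by $\hsep$ with unit $\emptyhypercontext$: each axiom translates directly into an identity on hyper-environments ($\ty{\mathcal{G}\hsep\emptyhypercontext}=\ty{\mathcal{G}}$, commutativity, associativity) and is witnessed by rearranging one or two instances of \textsc{H-Mix} (together with \textsc{H-Mix$_0$} in the halt case). For \hcpEquivNewComm, two successive \textsc{Cut}s on distinct channels $\tm{x}$ and $\tm{y}$ can be swapped: their premise is a single derivation in which the two pairs of dual endpoints live in distinct environments of the hyper-environment, and the two cuts may be performed in either order.

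The main obstacle is \hcpEquivScopeExt1. In the forward direction, starting from $\seq[{\piNew{x}{(\piPar{P}{Q})}}]{\mathcal{G}}$ with $\notFreeIn{x}{P}$, I plan to invert \textsc{Cut} to obtain a premise $\seq[{\piPar{P}{Q}}]{\mathcal{G}'\hsep\Gamma,\tmty{x}{A}\hsep\Delta,\tmty{x}{A^\bot}}$ with $\ty{\mathcal{G}}=\ty{\mathcal{G}'\hsep\Gamma,\Delta}$, then invert \textsc{H-Mix} to split this hyper-environment as $\ty{\mathcal{G}_P\hsep\mathcal{G}_Q}$ with $\seq[P]{\mathcal{G}_P}$ and $\seq[Q]{\mathcal{G}_Q}$. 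The crux is to argue, using $\notFreeIn{x}{P}$ together with linearity of typing, that the two environments containing the endpoints of $\tm{x}$ must both lie on the $\ty{\mathcal{G}_Q}$ side of the split; writing $\ty{\mathcal{G}_Q}=\ty{\mathcal{H}\hsep\Gamma,\tmty{x}{A}\hsep\Delta,\tmty{x}{A^\bot}}$ and $\ty{\mathcal{G}_P\hsep\mathcal{H}}=\ty{\mathcal{G}'}$, applying \textsc{Cut} to $\tm{Q}$ and then \textsc{H-Mix} with $\tm{P}$ rebuilds the desired derivation of $\seq[{\piPar{P}{\piNew{x}{Q}}}]{\mathcal{G}}$. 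The converse direction reverses exactly these steps. I expect the delicate bookkeeping to be in this multiset-splitting argument: one has to rule out any alternative assignment that would place a dual endpoint on $\ty{\mathcal{G}_P}$, and the side condition $\notFreeIn{x}{P}$ is precisely what makes this exclusion work.
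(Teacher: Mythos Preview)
Your proposal is correct and follows exactly the approach the paper takes: induction on the derivation of $\tm{P}\equiv\tm{Q}$. The paper's own proof is the single line ``By induction on the derivation of $\tm{P}\equiv\tm{Q}$'', so your elaboration simply unpacks what that induction entails; the case analysis you sketch (including the scope-extension case and its reliance on $\notFreeIn{x}{P}$ to force both $x$-endpoints onto the $\tm{Q}$ side) is the right way to carry it out.
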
 
\begin{proof}
  By induction on the derivation of $\tm{P}\equiv\tm{Q}$.
\end{proof}
\begin{theorem}[Preservation]\label{thm:hcp-preservation}
  If $\seq[P]{\mathcal{G}}$ and $\reducesto{P}{Q}$, then $\seq[Q]{\mathcal{G}}$.
\end{theorem}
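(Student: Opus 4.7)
The plan is to proceed by induction on the derivation of $\reducesto{P}{Q}$, with one case per reduction rule. The congruence rules \hcpRedGammaNew and \hcpRedGammaMix are immediate from the induction hypothesis by reapplying \textsc{Cut}, respectively \textsc{H-Mix}, to the resulting typing of the reduced subterm; \hcpRedGammaEquiv combines the induction hypothesis with \cref{lem:hcp-preservation-equiv} on both sides.

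For the axiom reduction \hcpRedAxCut1, I first establish a renaming lemma by straightforward induction on typing: if $\seq[P]{\mathcal{G} \hsep \Gamma, \tmty{x}{A}}$ and $\tm{w}$ is fresh for $\tm{P}$, then $\seq[\cpSub{w}{x}{P}]{\mathcal{G} \hsep \Gamma, \tmty{w}{A}}$. The case then unfolds $\tm{\cpCut{x}{\cpLink{w}{x}}{P}}$ as $\tm{\piNew{x}{(\piPar{\cpLink{w}{x}}{P})}}$, inverts \textsc{Cut}, \textsc{H-Mix}, and \textsc{Ax} to recover $\seq[P]{\mathcal{G} \hsep \Gamma, \tmty{x}{A}}$ with $\tmty{w}{A}$ attached to the right component, and applies the renaming lemma.

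The core of the proof is case \hcpRedBetaTensParr, from $\tm{\cpCut{x}{\piBoundSend{x}{y}{P}}{\piRecv{x}{y}{R}}}$ to $\tm{\piNew{x}{\piNew{y}{(\piPar{P}{R})}}}$. Inverting \textsc{Cut}, \textsc{H-Mix}, $(\tens)$, and $(\parr)$ yields $\seq[P]{\mathcal{G}_1 \hsep \Gamma, \tmty{y}{A} \hsep \Delta, \tmty{x}{B}}$ and $\seq[R]{\mathcal{G}_2 \hsep \Theta, \tmty{y}{A^\bot}, \tmty{x}{B^\bot}}$ with $\ty{\mathcal{G}} = \ty{\mathcal{G}_1 \hsep \mathcal{G}_2 \hsep \Gamma, \Delta, \Theta}$. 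One \textsc{H-Mix} produces a typing for $\tm{\piPar{P}{R}}$ in which the two endpoints of $\tm{x}$ sit in distinct components $\ty{\Delta, \tmty{x}{B}}$ and $\ty{\Theta, \tmty{y}{A^\bot}, \tmty{x}{B^\bot}}$; a first \textsc{Cut} on $\tm{x}$ fuses these into $\ty{\Delta, \Theta, \tmty{y}{A^\bot}}$, after which the two endpoints of $\tm{y}$ lie in distinct components $\ty{\Gamma, \tmty{y}{A}}$ and $\ty{\Delta, \Theta, \tmty{y}{A^\bot}}$; a second \textsc{Cut} on $\tm{y}$ gives $\seq[\piNew{y}{\piNew{x}{(\piPar{P}{R})}}]{\mathcal{G}}$. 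A final appeal to \hcpEquivNewComm via \cref{lem:hcp-preservation-equiv} permutes the two binders into the order of the reduct. The remaining $\beta$-rules are analogous but simpler: \hcpRedBetaOneBot needs only \textsc{H-Mix} after inverting $(\one)$ and $(\bot)$; \hcpRedBetaPlusWith{i} selects the matching branch of the $(\with)$-premise before recomposing with \textsc{H-Mix} and \textsc{Cut}.

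The main obstacle is the hyper-environment bookkeeping in \hcpRedBetaTensParr: at each of the two \textsc{Cut}s one must verify that the two endpoints of the cut-on name remain in distinct components, and that the final hyper-environment matches $\ty{\mathcal{G}}$ up to binder permutation. The side conditions $\tm{x} \notin \ty{\mathcal{G}}$ on the logical rules are not disturbed by any of the $\beta$-reductions, since these reductions only rearrange cuts and mixes around pre-existing logical introductions on the continuations; no new logical rule application is introduced and hence no new well-formedness check is required.
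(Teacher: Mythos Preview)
Your proposal is correct and follows exactly the approach the paper takes: induction on the derivation of $\reducesto{P}{Q}$. The paper states only that one line, so your write-up simply unpacks the cases that the paper leaves implicit; the hyper-environment bookkeeping you spell out for \hcpRedBetaTensParr (two successive \textsc{Cut}s followed by \hcpEquivNewComm via \cref{lem:hcp-preservation-equiv}) is the intended argument.
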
 
\begin{proof}
  By induction on the derivation of $\reducesto{P}{Q}$.
\end{proof}
\begin{definition}[Actions]
  A process $\tm{P}$ acts on $\tm{x}$ whenever $\tm{x}$ is free in the outermost
  term constructor of $\tm{P}$, \eg $\tm{\cpSend{x}{y}{P}{Q}}$ acts on $\tm{x}$
  but not on $\tm{y}$, and $\tm{\cpLink{x}{y}}$ acts on both $\tm{x}$ and $\tm{y}$.
  A process $\tm{P}$ is an action if it acts on some channel $\tm{x}$.
\end{definition}
\begin{definition}[Canonical forms]
  A process $\tm{P}$ is in canonical form if
  \[
  \tm{P} \equiv \tm{\piNew{x_1}{\dots\piNew{x_n}{(P_1 \mid \dots \mid P_{n+m+1})}}},
  \]
  such that: no process $\tm{P_i}$ is a cut or a mix; no process $\tm{P_i}$ is a link acting on a bound channel $\tm{x_i}$; and no two processes $\tm{P_i}$ and $\tm{P_j}$ are acting on the same bound channel $\tm{x_i}$.
\end{definition}
Note that we have added the restriction ``acting on a bound channel'' to the case for links. This was not necessary for \cp, as all links in \cp act on at least one bound channel. Consequently, processes such as $\tm{\cpLink{x}{y}}$ and $\tm{(\piPar{\cpLink{x}{y}}{\cpLink{z}{w}})}$ are considered to be in canonical form. This is a generalisation of \cp, where $\tm{\cpLink{x}{y}}$ is considered to be in canonical form. If this is objectionable, the reduction system can be extended with identity expansion, expanding, \eg the process $\tm{\cpLink{x^{\ty{\bot}}}{y}}$ to $\tm{\cpWait{x}{\cpHalt{y}}}$.
\begin{corollary}
  If a process $\tm{P}$ is in canonical form, then it is blocked on an external communication.
\end{corollary}
\begin{proof}
  We have
  \[
  \tm{P} \equiv \tm{\piNew{x_1}{\dots\piNew{x_n}{(P_1 \ppar \dots \ppar P_{n+m+1})}}},
  \]
  such that no $\tm{P_i}$ is a cut or a link acting on a bound channel, and no two processes $\tm{P_i}$ and $\tm{P_j}$ are acting on the same bound channel. The prefix of cuts and mixes introduces $n$ channels. Each application of cut requires an application of mix, so the prefix introduces $n+m+1$ processes. Therefore, at least $m+1$ of the processes $\tm{P_i}$ must be acting on a free channel, i.e., blocked on an external communication.
\end{proof}
\begin{theorem}[Progress]\label{thm:hcp-progress}
  If $\seq[P]{\Gamma}$, then either $\tm{P}$ is in canonical form, or there exists a process $\tm{Q}$ such that $\tm{P}\Longrightarrow\tm{Q}$.
\end{theorem}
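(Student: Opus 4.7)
The plan is to mirror the progress proof for \cp (\cref{thm:cp-progress}) but adapted to the decomposed syntax of \hcp. First I would use the structural congruence, in particular \hcpEquivNewComm, \hcpEquivScopeExt{}, \hcpEquivMixAss{}, and \hcpEquivMixComm, to float every name restriction and every parallel composition to the outside, obtaining a representative
\[
\tm{P}\;\equiv\;\tm{\piNew{x_1}{\dots\piNew{x_n}{(P_1 \ppar \dots \ppar P_{n+m+1})}}}
\]
in which no $\tm{P_i}$ is itself a $\nu$ or a $\ppar$; the halt unit \hcpEquivMixHalt{} lets me normalise away any leftover $\tm{\piHalt}$ components. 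By \cref{lem:hcp-preservation-equiv}, this representative is still well typed with hyper-environment $\ty{\mathcal{G}}$ (here abbreviated $\ty{\Gamma}$ in the statement).

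I would then case split on the shape of the $\tm{P_i}$. If some $\tm{P_i}$ is a link $\tm{\cpLink{w}{x_k}}$ acting on a bound name $\tm{x_k}$, I apply \hcpEquivScopeExt{} to bring $\tm{\piNew{x_k}}$ directly in front of this component and reduce with \hcpRedAxCut1, then close with \hcpRedGammaNew, \hcpRedGammaMix, and \hcpRedGammaEquiv. If two distinct components $\tm{P_i}$ and $\tm{P_j}$ act on the same bound channel $\tm{x_k}$, then I use the congruence to collect them under $\tm{\piNew{x_k}}$ as $\tm{\piNew{x_k}{(P_i \ppar P_j)}}$ (absorbing the remaining components back into the surrounding $\ppar$). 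From the typing derivation, the types assigned to $\tm{x_k}$ in the two environments of the hyper-environment are dual, so the outermost constructors of $\tm{P_i}$ and $\tm{P_j}$ must be a matched pair (\emph{e.g.}\ output/input, halt/wait, select/case), and the corresponding $\beta$-rule fires. Otherwise no $\tm{P_i}$ is a cut, a mix, a bound-channel link, nor shares a bound channel with any other $\tm{P_j}$, so $\tm{P}$ meets the definition of canonical form.

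The key lemma driving the second bullet is that well-typedness of the collected prefix forces dual top-level constructors on any shared bound channel: since every logical rule acts on the rightmost connective of the principal formula, and since hypersequent typing ensures the two endpoints of each bound $\tm{x_k}$ occur in different components of the hyper-environment with dual types, the outermost term constructors of $\tm{P_i}$ and $\tm{P_j}$ are determined by $\ty{A}$ and $\ty{A^\bot}$ and are exactly the $\beta$-matching pair. The side condition $\tm{x}\notin\ty{\mathcal{G}}$ on the logical rules is essential here: it rules out self-locking patterns such as $\tm{\piRecv{x}{}{\piBoundSend{x}{}{\piHalt}}}$, so neither $\tm{P_i}$ nor $\tm{P_j}$ can independently hold both endpoints of $\tm{x_k}$ in an incompatible configuration.

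The main obstacle I expect is the normalisation step: pushing every $\nu$ and every $\ppar$ to the top is conceptually routine but needs care, because scope extrusion requires a freeness side condition. I would handle this by induction on the size of the term, using $\alpha$-renaming of bound names to avoid capture and invoking \hcpEquivNewComm to reorder $\nu$-binders once the mixes have been collected. Once the normal form is obtained, the case analysis above goes through uniformly, giving either a redex or canonical form.
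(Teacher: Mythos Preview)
Your proposal is correct and follows essentially the same approach as the paper: normalise $\tm{P}$ via the structural congruence to a prefix of name restrictions over a flat parallel composition, then case split on whether a component is a bound-channel link, whether two components act on the same bound channel, or neither. If anything, you are more careful than the paper's own proof, which glosses over the bound-versus-free distinction for links and does not spell out the duality argument or the role of the $\tm{x}\notin\ty{\mathcal{G}}$ side condition.
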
 
\begin{proof}
  We consider the maximum prefix of cuts and mixes of $\tm{P}$ such that
  \[
  \tm{P} \equiv \tm{\piNew{x_1}{\dots\piNew{x_n}{(P_1 \ppar \dots \ppar P_{n+m+1})}}},
  \]
  and no $\tm{P_i}$ is a cut. If any process $\tm{P_i}$ is a link, we reduce by $(\cpLink{}{})$. If any two processes $\tm{P_i}$ and $\tm{P_j}$ are acting on the same channel $\tm{x_i}$, we rewrite by $\equiv$ and reduce by the appropriate $\beta$-rule. Otherwise, $\tm{P}$ is in canonical form.
\end{proof}
\begin{theorem}[Termination]\label{thm:hcp-termination}
  If $\seq[P]{\mathcal{G}}$, then there are no infinite $\Longrightarrow$-reduction sequences.
\end{theorem}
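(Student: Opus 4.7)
The plan is to adapt the measure-based argument of \cref{thm:cp-termination}. For a well-typed process $\tm{P}$, let $\mu(\tm{P})$ be the multiset collecting, for each $\nu$-binder in $\tm{P}$, the size of the formula on which it cuts; by \cref{thm:hcp-preservation} and inversion of \textsc{Cut} this formula is uniquely determined by the typing derivation (duality preserves size). Because the multiset order on $\mathbb{N}$ is well-founded, termination reduces to showing $(i)$ that $\equiv$ preserves $\mu$ and $(ii)$ that every $\Longrightarrow$-step strictly decreases it.

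Point $(i)$ is a routine induction on \cref{def:hcp-equiv}. The axioms \cpEquivLinkComm, \hcpEquivMixHalt1, \hcpEquivMixComm, and \hcpEquivMixAss1 touch only constructs that $\mu$ ignores. The axiom \hcpEquivNewComm swaps two adjacent cuts and \hcpEquivScopeExt1 slides a single cut across a parallel composition: neither introduces, removes, nor retypes any $\nu$-binder, so $\mu$ is invariant. The side condition $\notFreeIn{x}{P}$ on scope extrusion only ensures well-formedness, and \cref{lem:hcp-preservation-equiv} confirms that the surviving cut carries the same formula on both sides.

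Point $(ii)$ is a case analysis on \cref{def:hcp-reduction}. Rule \hcpRedAxCut1 removes a cut via substitution and introduces no new $\nu$-binder; \hcpRedBetaTensParr replaces one cut on $\ty{A \tens B}$ with two strictly smaller cuts on $\ty{A}$ and $\ty{B}$; \hcpRedBetaOneBot deletes a cut on $\ty{\one}$ outright; and \hcpRedBetaPlusWith1, \hcpRedBetaPlusWith2 each replace a cut on $\ty{A \plus B}$ with a single smaller cut. Every case strictly decreases $\mu$ in the multiset order. The contextual rules \hcpRedGammaNew and \hcpRedGammaMix embed the reduction inside a context whose binders contribute unchanged entries to $\mu$, and \hcpRedGammaEquiv is discharged by combining $(i)$ with the induction hypothesis.

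The main obstacle I expect lies in $(i)$, specifically in verifying that \hcpEquivNewComm and \hcpEquivScopeExt1 preserve not just the cardinality but the actual per-cut formula assignments; this follows by inverting the derivations supplied by \cref{lem:hcp-preservation-equiv}. Once established, well-foundedness of the multiset order on $\mathbb{N}$ rules out infinite reduction sequences.
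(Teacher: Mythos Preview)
Your proposal is correct and follows essentially the same approach as the paper: the paper's proof is literally ``As \cref{thm:cp-termination}'', and that theorem in turn uses the multiset of cut-formula sizes, observing that each $\beta$-step replaces one cut by zero, one, or two strictly smaller ones while structural congruence preserves the measure. Your write-up simply makes this argument explicit for the \hcp rules and spells out the multiset-order reasoning that the paper leaves implicit.
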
 
\begin{proof}
  As \cref{thm:cp-termination}.
\end{proof}

\section{Relation between \cp and \hcp}
\label{sec:cp2hcp}

In this section, we discuss the relationship between \cp and \hcp. We prove two important theorems: every \cp process is an \hcp process; and \hcp supports the same protocols as \cp. We define a translation from terms in \cp to terms in \hcp which breaks down the term constructs in \cp into their more atomic constructs in \hcp.
\begin{definition}\label{def:cp2hcp-terms}
  \[
    \begin{array}{lcllcl}
         \tm{\mtf{\cpLink{x}{y}}}
      &  \coloneqq & \tm{\cpLink{x}{y}}
      &  \tm{\mtf{\cpCut{x}{P}{Q}}}
      &  \coloneqq & \tm{\piNew{x}{(\piPar{\mtf{P}}{\mtf{Q}})}}
      \\ \tm{\mtf{\cpSend{x}{y}{P}{Q}}}
      &  \coloneqq & \tm{\piBoundSend{x}{y}{(\piPar{\mtf{P}}{\mtf{Q}})}}
      &  \tm{\mtf{\cpRecv{x}{y}{P}}}
      &  \coloneqq & \tm{\piRecv{x}{y}{\mtf{P}}}
      \\ \tm{\mtf{\cpHalt{x}}}
      &  \coloneqq & \tm{\piBoundSend{x}{}{\piHalt}}
      &  \tm{\mtf{\cpWait{x}{P}}}
      &  \coloneqq & \tm{\piRecv{x}{}{\mtf{P}}}
      \\ \tm{\mtf{\cpInl{x}{P}}}
      &  \coloneqq & \tm{\cpInl{x}{\mtf{P}}}
      &  \tm{\mtf{\cpInr{x}{P}}}
      &  \coloneqq & \tm{\cpInr{x}{\mtf{P}}}
      \\ \tm{\mtf{\cpCase{x}{P}{Q}}}
      &  \coloneqq & \tm{\cpCase{x}{\mtf{P}}{\mtf{Q}}}
      &  \tm{\mtf{\cpAbsurd{x}}}
      &  \coloneqq & \tm{\cpAbsurd{x}}
    \end{array}
  \]
\end{definition}\noindent
We use this relation in the first proof, and its analogue for derivations in the second.

\subsection{Every \cp process is an \hcp process}
First, we prove that each \cp process can be translated by this trivial translation to an \hcp process, and that this translation respects structural congruence and reduction. Reductions from \cp can be trivially translated to reductions in \hcp. 
\begin{theorem}\label{thm:cp2hcp-typing}
  If $\seq[P]{\Gamma}$ in \cp, then $\seq[\mtf{P}]{\Gamma}$ in \hcp.
\end{theorem}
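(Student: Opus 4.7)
The plan is to argue by induction on the derivation of $\seq[P]{\Gamma}$ in \cp. The translation $\mtf{\cdot}$ is defined so that each \cp rule either maps directly to the corresponding \hcp rule or decomposes into a short sequence of \hcp rules, so most inductive cases will be immediate. The interesting cases are exactly those where a \cp rule has two premises whose contexts are disjoint; in \hcp these contexts must first be merged into a single hyper-environment via \textsc{H-Mix} before the corresponding logical or structural rule is applied.

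Concretely, I would proceed as follows. For \textsc{Ax} and for the additive rules $(\plus_1)$, $(\plus_2)$, $(\with)$, $(\top)$ and the parr rules $(\parr)$, $(\bot)$, the translation leaves the term essentially unchanged and the \hcp rule matches the \cp rule after choosing $\ty{\mathcal{G}} = \ty{\emptyhypercontext}$; apply the induction hypothesis and the same rule. For \textsc{Cut} with premises $\seq[P]{\Gamma,\tmty{x}{A}}$ and $\seq[Q]{\Delta,\tmty{x}{A^\bot}}$, apply the IH to both, combine $\seq[\mtf{P}]{\Gamma,\tmty{x}{A}}$ and $\seq[\mtf{Q}]{\Delta,\tmty{x}{A^\bot}}$ with \textsc{H-Mix} to obtain $\seq[\piPar{\mtf{P}}{\mtf{Q}}]{\Gamma,\tmty{x}{A} \hsep \Delta,\tmty{x}{A^\bot}}$, then apply the \hcp \textsc{Cut} rule to derive $\seq[\piNew{x}(\piPar{\mtf{P}}{\mtf{Q}})]{\Gamma,\Delta}$, which matches $\tm{\mtf{\cpCut{x}{P}{Q}}}$. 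For $(\tens)$, use the same \textsc{H-Mix} trick and then apply the \hcp $(\tens)$-rule with $\ty{\mathcal{G}} = \ty{\emptyhypercontext}$ to build $\tm{\piBoundSend{x}{y}{(\piPar{\mtf{P}}{\mtf{Q}})}}$. For $(\one)$, use \textsc{H-Mix}$_0$ to derive $\seq[\piHalt]{\emptyhypercontext}$ and then apply the \hcp $(\one)$-rule to obtain $\seq[\piBoundSend{x}{}{\piHalt}]{\tmty{x}{\one}}$, matching $\tm{\mtf{\cpHalt{x}}}$.

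The only genuine checking beyond bookkeeping is verifying the side condition of the \hcp logical rules, namely $\tm{x}\not\in\ty{\mathcal{G}}$. This is automatic in every case above: the rest of the hyper-environment $\ty{\mathcal{G}}$ is always empty, because \cp sequents are plain (non-hyper) environments and hence the translation only ever invokes a logical rule on a hyper-environment built from the immediately preceding \textsc{H-Mix} (or from nothing). In particular the Tens case introduces the bound output on $\tm{x}$ exactly at the fresh hyper-environment $\ty{\Gamma, \tmty{y}{A} \hsep \Delta, \tmty{x}{B}}$, and no further sequents are present.

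I do not expect a real obstacle: the translation is designed precisely to make this simulation work rule by rule. The subtlest point is to remember to insert \textsc{H-Mix} (and, for $(\one)$, \textsc{H-Mix}$_0$) whenever the source rule merges two independent derivations or starts from an empty context, and to check that the freshness conditions on names inherited from \cp's linear management of $\ty{\Gamma}$, $\ty{\Delta}$ suffice to discharge the \hcp side condition.
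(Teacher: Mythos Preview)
Your proposal is correct and follows essentially the same approach as the paper: induction on the \cp derivation, with the interesting cases being \textsc{Cut}, $(\tens)$, and $(\one)$, each handled by inserting \textsc{H-Mix} (respectively \textsc{H-Mix}$_0$) before the corresponding \hcp rule. Your additional remark that the side condition $\tm{x}\notin\ty{\mathcal{G}}$ is trivially satisfied because $\ty{\mathcal{G}}$ is always empty is a nice explicit check that the paper leaves implicit.
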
 
\begin{proof}
  By induction on the derivation of $\seq[P]{\Gamma}$. We show the interesting cases:
  \begin{itemize}
  \item
    Case \textsc{Cut}. We rewrite as follows:
    \begin{gather*}
      \begin{array}{lcl}
        \AXC{$\seq[P]{\Gamma, \tmty{x}{A}}$}
        \AXC{$\seq[Q]{\Delta, \tmty{x}{A^\bot}}$}
        \NOM{Cut}
        \BIC{$\seq[\cpCut{x}{P}{Q}]{\Gamma, \Delta}$}
        \DisplayProof
        & \Rightarrow
        & \AXC{$\seq[\mtf{P}]{\Gamma, \tmty{x}{A}}$}
          \AXC{$\seq[\mtf{Q}]{\Delta, \tmty{x}{A^\bot}}$}
          \NOM{H-Mix}
          \BIC{$\seq[\piPar{\mtf{P}}{\mtf{Q}}]{
          \Gamma, \tmty{x}{A} \hsep \Delta, \tmty{x}{A^\bot}}$}
          \NOM{H-Cut}
          \UIC{$\seq[\piNew{x}{(\piPar{\mtf{P}}{\mtf{Q}})}]{
          \Gamma, \Delta}$}
          \DisplayProof
      \end{array}
    \end{gather*}
  \item
    Case $(\tens)$. We rewrite as follows:
    \begin{gather*}
      \begin{array}{lcl}
        \AXC{$\seq[P]{\Gamma, \tmty{y}{A}}$}
        \AXC{$\seq[Q]{\Delta, \tmty{x}{B}}$}
        \SYM{\tens}
        \BIC{$\seq[\cpSend{x}{y}{P}{Q}]{\Gamma, \Delta, \tmty{x}{A \tens B}}$}
        \DisplayProof
        & \Rightarrow
        & \AXC{$\seq[\mtf{P}]{\Gamma, \tmty{y}{A}}$}
          \AXC{$\seq[\mtf{Q}]{\Delta, \tmty{x}{B}}$}
          \NOM{H-Mix}
          \BIC{$\seq[\piPar{\mtf{P}}{\mtf{Q}}]{
          \Gamma, \tmty{y}{A} \hsep \Delta, \tmty{x}{B}}$}
          \SYM{\tens}
          \UIC{$\seq[\piBoundSend{x}{y}{(\piPar{\mtf{P}}{\mtf{Q}})}]{
          \Gamma, \Delta, \tmty{x}{A \tens B}}$}
          \DisplayProof
      \end{array}
    \end{gather*}
  \item
    Case $(\one)$. We rewrite as follows:
    \begin{gather*}
      \begin{array}{lcl}
        \AXC{}\SYM{\one}
        \UIC{$\seq[\cpHalt{x}]{\tmty{x}{\one}}$}
        \DisplayProof
        & \Rightarrow
        & \AXC{}
          \NOM{H-Mix$_0$}
          \UIC{$\seq[\piHalt]{\emptyhypercontext}$}
          \SYM{\one}
          \UIC{$\seq[\piBoundSend{x}{}{\piHalt}]{\tmty{x}{\one}}$}
          \DisplayProof
      \end{array}
    \end{gather*}
  \end{itemize}
  \vspace*{-\baselineskip}
\end{proof}
\begin{theorem}\label{thm:cp2hcp-equiv}
  If $\tm{P}\equiv\tm{Q}$ in \cp, then $\tm{\mtf{P}}\equiv\tm{\mtf{Q}}$ in \hcp.
\end{theorem}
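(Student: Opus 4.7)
The plan is to proceed by induction on the derivation of $\tm{P} \equiv \tm{Q}$ in \cp. The congruence-closure cases (reflexivity, symmetry, transitivity, and closure under every term constructor) follow immediately from the inductive hypothesis together with the fact that \hcp's $\equiv$ is itself a congruence. So the real content lies in translating each of the three generating axioms of \cref{def:cp-equiv} into a chain of \hcp's axioms.

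For \cpEquivLinkComm, the translation is the identity on links, so $\tm{\mtf{\cpLink{x}{y}}} = \tm{\cpLink{x}{y}} \equiv \tm{\cpLink{y}{x}} = \tm{\mtf{\cpLink{y}{x}}}$ is immediate from \hcp's own link-symmetry axiom. For \cpEquivCutComm, we have $\tm{\mtf{\cpCut{x}{P}{Q}}} = \tm{\piNew{x}{(\piPar{\mtf{P}}{\mtf{Q}})}}$, and a single use of \hcpEquivMixComm\ under the restriction yields $\tm{\piNew{x}{(\piPar{\mtf{Q}}{\mtf{P}})}} = \tm{\mtf{\cpCut{x}{Q}{P}}}$.

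The interesting case is \cpEquivCutAss1, which is where the decomposition of \cp cuts into independent name restriction and parallel composition really earns its keep. Starting from
\[
\tm{\mtf{\cpCut{x}{P}{\cpCut{y}{Q}{R}}}}
= \tm{\piNew{x}{(\piPar{\mtf{P}}{\piNew{y}{(\piPar{\mtf{Q}}{\mtf{R}})}})}},
\]
the side condition $\notFreeIn{y}{P}$ lets us apply \hcpEquivScopeExt1\ to pull $\piNew{y}$ outward, then \hcpEquivNewComm\ to swap the two restrictions, then \hcpEquivMixAss1\ to reassociate the parallel composition, and finally \hcpEquivScopeExt1\ once more (using $\notFreeIn{x}{R}$) to push $\piNew{x}$ back inside. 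The result is exactly $\tm{\piNew{y}{(\piPar{\piNew{x}{(\piPar{\mtf{P}}{\mtf{Q}})}}{\mtf{R}})}} = \tm{\mtf{\cpCut{y}{\cpCut{x}{P}{Q}}{R}}}$.

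The only mildly delicate point is that the side conditions $\notFreeIn{x}{R}$ and $\notFreeIn{y}{P}$ transfer from \cp terms to their \hcp translations; this is straightforward since $\mtf{\cdot}$ is compositional and preserves free variables. I do not expect any real obstacle: once the associativity case above is written out, the remaining congruence cases are one-line appeals to the inductive hypothesis and the corresponding congruence rule of \hcp.
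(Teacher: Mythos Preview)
Your proposal is correct and follows the same approach as the paper: induction on the derivation of $\tm{P}\equiv\tm{Q}$. The paper records only that one line, whereas you have spelled out the axiom cases in full; your treatment of \cpEquivCutAss1 via \hcpEquivScopeExt1, \hcpEquivNewComm, and \hcpEquivMixAss1 is exactly the intended computation, and your observation that $\mtf{\cdot}$ preserves free variables (so the side conditions transfer) is the only subtlety worth noting.
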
 
\begin{proof}
  By induction on the derivation of $\tm{P}\equiv\tm{Q}$.
\end{proof}
\begin{theorem}\label{thm:cp2hcp-reduction}
  If $\reducesto{P}{Q}$ in \cp, then $\reducesto{\mtf{P}}{\mtf{Q}}$ in \hcp.
\end{theorem}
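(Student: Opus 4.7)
The plan is to proceed by induction on the derivation of $\reducesto{P}{Q}$ in \cp, doing case analysis on the final rule. For each $\beta$- and axiom-reduction, the translation unfolds each $\tm{\cpCut{x}{\cdot}{\cdot}}$ into $\tm{\piNew{x}{(\piPar{\cdot}{\cdot})}}$ and unfolds the monolithic output and halt prefixes into their decomposed \hcp forms; we can then fire the corresponding \hcp reduction, possibly followed by a rewrite via \cref{thm:cp2hcp-equiv} and the \hcp structural congruence to reach $\tm{\mtf{Q}}$ on the nose. The two $\gamma$-cases are handled by the induction hypothesis together with the \hcp contextual rules.

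The only genuinely interesting case is \cpRedBetaTensParr. After translation, the redex is $\tm{\piNew{x}{(\piPar{\piBoundSend{x}{y}{(\piPar{\mtf{P}}{\mtf{Q}})}}{\piRecv{x}{y}{\mtf{R}}})}}$, and one step of \hcpRedBetaTensParr produces $\tm{\piNew{x}{\piNew{y}{(\piPar{(\piPar{\mtf{P}}{\mtf{Q}})}{\mtf{R}})}}}$, whereas $\tm{\mtf{\cpCut{y}{P}{\cpCut{x}{Q}{R}}}} = \tm{\piNew{y}{(\piPar{\mtf{P}}{\piNew{x}{(\piPar{\mtf{Q}}{\mtf{R}})}})}}$. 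To close the gap I would use \hcpEquivNewComm to swap the two binders, \hcpEquivMixAss1 to re-associate the parallel composition, and \hcpEquivScopeExt1 to pull $\tm{\piNew{x}{}}$ past $\tm{\mtf{P}}$ (which is licit because the side condition $\notFreeIn{x}{P}$ of \cpEquivCutAss1 in \cp is exactly what the translation preserves as $\notFreeIn{x}{\mtf{P}}$). Then appeal to \hcpRedGammaEquiv to absorb this chain of $\equiv$-steps into a single \hcp-reduction.

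The remaining redex cases are straightforward. \cpRedAxCut1 translates to an instance of \hcpRedAxCut1 verbatim, since the substitution $\tm{\cpSub{w}{x}{\mtf{P}}} = \tm{\mtf{\cpSub{w}{x}{P}}}$ commutes with the translation. \cpRedBetaOneBot becomes an instance of \hcpRedBetaOneBot followed by one use of \hcpEquivMixHalt1 to discard the residual $\tm{\piHalt}$, again bridged by \hcpRedGammaEquiv. Both \cpRedBetaPlusWith1 and \cpRedBetaPlusWith2 translate directly to their \hcp namesakes because the translation is homomorphic on $\tm{\cpInl{x}{\cdot}}$, $\tm{\cpInr{x}{\cdot}}$ and $\tm{\cpCase{x}{\cdot}{\cdot}}$.

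Finally, for the contextual rules: \cpRedGammaCut decomposes on the \hcp side into one use of \hcpRedGammaMix (applied to $\tm{\piPar{\mtf{P}}{\mtf{Q}}}$ using the induction hypothesis $\reducesto{\mtf{P}}{\mtf{P^\prime}}$) followed by one use of \hcpRedGammaNew under the outer $\tm{\piNew{x}{}}$. The \cpRedGammaEquiv case uses \cref{thm:cp2hcp-equiv} to translate both $\equiv$-premises and then applies \hcpRedGammaEquiv to the \hcp reduction provided by the induction hypothesis. The main obstacle is purely bookkeeping: keeping track of which structural rewrites in \hcp realise the \cp rule \cpEquivCutAss1 that is implicitly used by \cpRedBetaTensParr to nest the restrictions.
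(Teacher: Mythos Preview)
Your proposal is correct and follows exactly the paper's approach---induction on the derivation of $\reducesto{P}{Q}$---which the paper states in a single line without the case analysis you spell out. One minor point: in the $\cpRedBetaTensParr$ case the freshness $\notFreeIn{x}{P}$ needed for $\hcpEquivScopeExt1$ comes from the definition of $\tm{\cpSend{x}{y}{P}{Q}}$ itself (the paper notes that $\tm{x}$ is bound only in $\tm{Q}$), not from $\cpEquivCutAss1$, but this does not affect the correctness of your argument.
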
 
\begin{proof}
  By induction on the the derivation of $\reducesto{P}{Q}$.
\end{proof}
\begin{theorem}\label{hcp2cp-reduction}
  If $\reducesto{\mtf{P}}{R}$ in \hcp, then there is a $\tm{Q}$ such that $\reducesto{P}{Q}$ in \cp and $\tm{R}\equiv\tm{\mtf{Q}}$ in \hcp.

\end{theorem}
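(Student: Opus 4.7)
The plan is to proceed by induction on the derivation of $\reducesto{\mtf{P}}{R}$, with case analysis on the last rule used. Since the translation $\mtf{\cdot}$ is essentially a structural refinement of \cp into finer \textpi-calculus primitives, each redex exposed in $\mtf{P}$ must trace back to a corresponding redex in $P$. For the $\hcpRedBetaTensParr$ case, $\mtf{P}$ must be of the form $\piNew{x}(\piPar{\piBoundSend{x}{y}{S}}{\piRecv{x}{y}{T}})$, and inverting the translation forces $P = \cpCut{x}{\cpSend{x}{y}{P_{11}}{P_{12}}}{\cpRecv{x}{y}{P_{21}}}$ with $S = \piPar{\mtf{P_{11}}}{\mtf{P_{12}}}$ and $T = \mtf{P_{21}}$. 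Taking $Q = \cpCut{y}{P_{11}}{\cpCut{x}{P_{12}}{P_{21}}}$, a short chain using $\hcpEquivMixAss{1}$, $\hcpEquivNewComm$, and $\hcpEquivScopeExt{1}$ establishes $R \equiv \mtf{Q}$. The other $\beta$-cases are analogous: $\hcpRedBetaOneBot$ additionally uses $\hcpEquivMixHalt{1}$ to absorb the residual $\piHalt$, and the remaining rules ($\hcpRedAxCut{1}$, $\hcpRedBetaPlusWith{1}$, $\hcpRedBetaPlusWith{2}$) are essentially immediate, using the fact that substitution commutes with $\mtf{\cdot}$.

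For the congruence cases, $\hcpRedGammaNew$ forces $\mtf{P} = \piNew{x}{T'}$, hence $P = \cpCut{x}{P_1}{P_2}$ with $T' = \piPar{\mtf{P_1}}{\mtf{P_2}}$, leaving the sub-derivation $\reducesto{\piPar{\mtf{P_1}}{\mtf{P_2}}}{R'}$. Neither a $\beta$-rule nor $\hcpRedGammaNew$ can fire at the root of a $\piPar$, so this sub-reduction must (modulo $\hcpRedGammaEquiv$) be an $\hcpRedGammaMix$ step descending into $\mtf{P_1}$ or $\mtf{P_2}$, to which I apply the induction hypothesis and then close with one application of $\cpRedGammaCut$. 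The $\hcpRedGammaMix$ case at the outermost level is vacuous, since the translation never produces a term whose root is $\piPar$.

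The main obstacle is the $\hcpRedGammaEquiv$ case, because HCP structural congruence is strictly richer than CP's: axioms such as $\hcpEquivScopeExt{1}$ and $\hcpEquivMixHalt{1}$ can reshape a translated term into an HCP term that is not literally $\mtf{P'}$ for any \cp-term $P'$. To handle this I would prove an auxiliary lemma: whenever $\mtf{P} \equiv \mathcal{Q}$ in \hcp, there exists $P'$ with $P \equiv P'$ in \cp and $\mathcal{Q} \equiv \mtf{P'}$ in \hcp. Combined with \cref{thm:cp2hcp-equiv} and \cref{lem:cp-preservation-equiv}, this reduces every $\hcpRedGammaEquiv$ step to the already-handled cases. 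The auxiliary lemma is proved by induction on the derivation of $\mtf{P} \equiv \mathcal{Q}$, relying on two key observations: the axioms corresponding to $\cpEquivCutComm$ and $\cpEquivCutAss{1}$ are derivable inside \hcp from $\hcpEquivMixComm$, $\hcpEquivMixAss{1}$, $\hcpEquivNewComm$, and $\hcpEquivScopeExt{1}$, so CP equivalence embeds into HCP equivalence on translated terms; and the strictly-\hcp axiom $\hcpEquivMixHalt{1}$ cannot apply at the root of an image term, because $\piHalt$ itself is not in the image of $\mtf{\cdot}$, forcing such axioms to descend into subterms where the induction hypothesis takes over.
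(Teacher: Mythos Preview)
Your overall strategy matches the paper's one-line proof (induction on the derivation of $\reducesto{\mtf{P}}{R}$), and your case analysis for the $\beta$-rules and for $\hcpRedGammaNew$/$\hcpRedGammaMix$ is correct. The gap is in the $\hcpRedGammaEquiv$ case: your auxiliary lemma is vacuous as stated. Given $\mtf{P} \equiv \mathcal{Q}$, taking $P' \coloneqq P$ already yields $P \equiv P'$ in \cp and $\mathcal{Q} \equiv \mtf{P'}$ in \hcp by symmetry of the hypothesis, so the lemma says nothing. To invoke the induction hypothesis on the sub-derivation $\reducesto{Q}{Q'}$ you need $Q$ to be \emph{literally} of the form $\mtf{P'}$, not merely $\equiv$-related to such a term; but that stronger statement fails, since for instance reading $\hcpEquivMixHalt{1}$ right to left inside $\mtf{P}$ produces a subterm $\tm{\piPar{\cdot}{\piHalt}}$ that no translation can produce.

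The clean fix is to strengthen the inductive statement to: if $S \equiv \mtf{P}$ and $\reducesto{S}{R}$ in \hcp, then there is a $Q$ with $\reducesto{P}{Q}$ in \cp and $R \equiv \mtf{Q}$. The $\hcpRedGammaEquiv$ case then collapses (absorb the extra congruence into the hypothesis and apply the IH to the strictly smaller sub-derivation). The price is that every remaining case now starts from $S \equiv \mtf{P}$ rather than $S = \mtf{P}$, so you need a genuine shape lemma: whenever $\mtf{P}$ is \hcp-congruent to a term whose head is a specific redex, then $P$ is \cp-congruent to a term exhibiting the corresponding \cp redex. That lemma is where your observations about which \hcp axioms can and cannot fire on an image term actually earn their keep.
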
 
\begin{proof}
  By induction on the derivation of $\reducesto{\mtf{P}}{R}$.
\end{proof}\noindent

\subsection{\hcp supports the same communication protocols as \cp}
In this section, we prove that \hcp supports the same communication protocols as \cp. This is the same as saying that it inhabits the same session types, or that the associated logical systems derive the same theorems. We show this by proving that we can internalise the hyper-environments as formulas in the logic. This is a standard method for proving the soundness of a hypersequent calculus.

We start off by defining a relation on derivations of \hcp, which we call ``disentanglement''. This relation allows us to move applications of $\textsc{H-Mix}$ downwards in the proof tree. We can use this relation to rewrite any derivation to a form in which all mixes are either attached to their respective cuts or tensors, or at the top-level.
\begin{definition}\label{def:hcp-disentangle}
  Disentanglement is described by the smallest relation $\moveMixDown$ on processes closed under the rules in \cref{fig:hcp-disentangle}, plus the structural congruence $\equiv$. The relation $\moveMixDown^\star$ is the reflexive, transitive closure of $\moveMixDown$.
\end{definition}\noindent
\begin{sidewaysfigure} 
  \[
  \begin{array}{lcr}
    \AXC{$\seq[P]{\mathcal{G} \hsep \Gamma, \tmty{x}{A} \hsep \Delta, \tmty{x}{A^\bot}}$}
    \AXC{$\seq[Q]{\mathcal{H}}$}
    \NOM{H-Mix}
    \BIC{$\seq[(P \ppar Q)]{\mathcal{G} \hsep \mathcal{H} \hsep \Gamma, \tmty{x}{A} \hsep \Delta, \tmty{x}{A^\bot}}$}
    \NOM{H-Cut}
    \UIC{$\seq[\piNew{x}{(P \ppar Q)}]{\mathcal{G} \hsep \mathcal{H} \hsep \Gamma, \Delta}$}
    \DisplayProof
    & \moveMixDown
    & \AXC{$\seq[P]{\mathcal{G} \hsep \Gamma, \tmty{x}{A} \hsep \Delta, \tmty{x}{A^\bot}}$}
      \NOM{H-Cut}
      \UIC{$\seq[\piNew{x}{P}]{\mathcal{G} \hsep \Gamma, \Delta}$}
      \AXC{$\seq[Q]{\mathcal{H}}$}
      \NOM{H-Mix}
      \BIC{$\seq[(\piNew{x}{P} \ppar Q)]{\mathcal{G} \hsep \mathcal{H} \hsep \Gamma, \Delta}$}
      \DisplayProof
    \\\\
    \AXC{$\seq[P]{\mathcal{G} \hsep \Gamma, \tmty{y}{A} \hsep \Delta, \tmty{x}{B}}$}
    \AXC{$\seq[Q]{\mathcal{H}}$}
    \NOM{H-Mix}
    \BIC{$\seq[(P \ppar Q)]{\mathcal{G} \hsep \mathcal{H} \hsep \Gamma, \tmty{y}{A} \hsep \Delta, \tmty{x}{B}}$}
    \SYM{\tens}
    \UIC{$\seq[\piBoundSend{x}{y}{(P \ppar Q)}]{\mathcal{G} \hsep \mathcal{H} \hsep \Gamma, \Delta, \tmty{x}{A \tens B}}$}
    \DisplayProof
    & \moveMixDown
    & \AXC{$\seq[P]{\mathcal{G} \hsep \Gamma, \tmty{y}{A} \hsep \Delta, \tmty{x}{B}}$}
      \SYM{\tens}
      \UIC{$\seq[\piBoundSend{x}{y}{P}]{\mathcal{G} \hsep \Gamma, \Delta, \tmty{x}{A \tens B}}$}
      \AXC{$\seq[Q]{\mathcal{H}}$}
      \NOM{H-Mix}
      \BIC{$\seq[(\piBoundSend{x}{y}{P} \ppar Q)]{\mathcal{G} \hsep \mathcal{H} \hsep \Gamma, \Delta, \tmty{x}{A \tens B}}$}
      \DisplayProof
    \\\\
    \AXC{$\seq[P]{\mathcal{G} \hsep \Gamma, \tmty{y}{A}, \tmty{x}{B}}$}
    \AXC{$\seq[Q]{\mathcal{H}}$}
    \NOM{H-Mix}
    \BIC{$\seq[(P \ppar Q)]{\mathcal{G} \hsep \mathcal{H} \hsep \Gamma, \tmty{y}{A}, \tmty{x}{B}}$}
    \SYM{\parr}
    \UIC{$\seq[(\piRecv{x}{y}{(P \ppar Q)})]{\mathcal{G} \hsep \mathcal{H} \hsep \Gamma, \tmty{x}{A \parr B}}$}
    \DisplayProof
    & \moveMixDown
    & \AXC{$\seq[P]{\mathcal{G} \hsep \Gamma, \tmty{y}{A}, \tmty{x}{B}}$}
      \SYM{\parr}
      \UIC{$\seq[\piRecv{x}{y}{P}]{\mathcal{G} \hsep \mathcal{H} \hsep \Gamma, \tmty{x}{A \parr B}}$}
      \AXC{$\seq[Q]{\mathcal{H}}$}
      \NOM{H-Mix}
      \BIC{$\seq[(\piRecv{x}{y}{P} \ppar Q)]{\mathcal{G} \hsep \mathcal{H} \hsep \Gamma, \tmty{x}{A \parr B}}$}
      \DisplayProof
    \\\\
    \AXC{$\seq[P]{\mathcal{G}}$}
    \SYM{\one}
    \UIC{$\seq[\piBoundSend{x}{}{P}]{\mathcal{G} \hsep \tmty{x}{\one}}$}
    \DisplayProof
    & \moveMixDown
    & \AXC{}
      \NOM{H-Mix$_0$}
      \UIC{$\seq[\piHalt]{\emptyhypercontext}$}
      \SYM{\one}
      \UIC{$\seq[\piBoundSend{x}{}{0}]{\tmty{x}{\one}}$}
      \AXC{$\seq[P]{\mathcal{G}}$}
      \NOM{H-Mix}
      \BIC{$\seq[(\piBoundSend{x}{}{\piHalt} \ppar P)]{\mathcal{G} \hsep \tmty{x}{\one}}$}
      \DisplayProof
    \\\\
    \AXC{$\seq[P]{\mathcal{G} \hsep \Gamma}$}
    \AXC{$\seq[Q]{\mathcal{H}}$}
    \NOM{H-Mix}
    \BIC{$\seq[(P \ppar Q)]{\mathcal{G} \hsep \mathcal{H} \hsep \Gamma}$}
    \SYM{\bot}
    \UIC{$\seq[\piRecv{x}{}{(P \ppar Q)}]{\mathcal{G} \hsep \mathcal{H} \hsep \Gamma, \tmty{x}{\bot}}$}
    \DisplayProof
    & \moveMixDown
    & \AXC{$\seq[P]{\mathcal{G} \hsep \Gamma}$}
      \SYM{\bot}
      \UIC{$\seq[\piRecv{x}{}{P}]{\mathcal{G} \hsep \Gamma, \tmty{x}{\bot}}$}
      \AXC{$\seq[Q]{\mathcal{H}}$}
      \NOM{H-Mix}
      \BIC{$\seq[(\piRecv{x}{}{P} \ppar Q)]{\mathcal{G} \hsep \mathcal{H} \hsep \Gamma, \tmty{x}{\bot}}$}
      \DisplayProof
    \\\\
    \AXC{$\seq[P]{\mathcal{G} \hsep \Gamma, \tmty{x}{A}}$}
    \AXC{$\seq[Q]{\mathcal{H}}$}
    \NOM{H-Mix}
    \BIC{$\seq[(P \ppar Q)]{\mathcal{G} \hsep \mathcal{H} \hsep \Gamma, \tmty{x}{A}}$}
    \SYM{\plus_1}
    \UIC{$\seq[\cpInl{x}{(P \ppar Q)}]{\mathcal{G} \hsep \mathcal{H} \hsep \Gamma, \tmty{x}{A \plus B}}$}
    \DisplayProof
    & \moveMixDown
    & \AXC{$\seq[P]{\mathcal{G} \hsep \Gamma, \tmty{x}{A}}$}
      \SYM{\plus_1}
      \UIC{$\seq[\cpInl{x}{P}]{\mathcal{G} \hsep \Gamma, \tmty{x}{A \plus B}}$}
      \AXC{$\seq[Q]{\mathcal{H}}$}
      \NOM{H-Mix}
      \BIC{$\seq[(\cpInl{x}{P} \ppar Q)]{\mathcal{G} \hsep \mathcal{H} \hsep \Gamma, \tmty{x}{A \plus B}}$}
      \DisplayProof
  \end{array}
  \]
  \caption{The disentanglement relation for \hcp.}
  \label{fig:hcp-disentangle}
\end{sidewaysfigure}
\noindent
We named this relation ``disentanglement'' to reflect the intuition that proof in \hcp represent multiple entangled \cp proofs, which we can disentangle. 

Disentanglement is terminating, and confluent up to the associativity and commutativity of mixes.
\begin{lemma}[Disentangle]\label{lem:hcp-disentangle}
  If $\seq[P]{\Gamma_1 \hsep \dots \hsep \Gamma_n}$ in \hcp, then there exist processes $\tm{P_1}, \dots, \tm{P_n}$ in \cp such that $\seq[P_1]{\Gamma_1}, \dots, \seq[P_n]{\Gamma_n}$ and
  \[
    \begin{prooftree*}
      \AXC{\seq[P]{\Gamma_1 \hsep \dots \hsep \Gamma_n}}
    \end{prooftree*}
    \moveMixDown^\star
    \begin{prooftree*}
      \AXC{$\seq[\mtf{P_1}]{\Gamma_1}$}
      \AXC{$\dots$}
      \AXC{$\seq[\mtf{P_n}]{\Gamma_n}$}
      \NOM{H-Mix$^\star$}
      \TIC{$\seq[(\mtf{P_1} \ppar \dots \ppar \mtf{P_n})]{\Gamma_1 \hsep \dots \hsep \Gamma_n}$}
    \end{prooftree*}
  \]
\end{lemma}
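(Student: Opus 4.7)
I would prove the lemma by induction on the derivation of $\seq[P]{\Gamma_1 \hsep \dots \hsep \Gamma_n}$, aiming in each inductive case to use the corresponding rewrite from \cref{fig:hcp-disentangle} to push the topmost \textsc{H-Mix} step below the active rule. The base cases are immediate: the axiom gives $n=1$ with $P_1 \coloneqq \cpLink{x}{y}$ and $\mtf{P_1} = P$, while \textsc{H-Mix$_0$} has $n = 0$, so there is nothing to exhibit.

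For \textsc{H-Mix}, I apply the IH independently to each premise, obtaining two lists of CP processes, and concatenate them; the parallel composition of the two disentangled forms gives the required decomposition, modulo $\ppar$-associativity and $\ppar$-commutativity, which are absorbed into $\moveMixDown$ via $\equiv$. For each logical rule other than $(\with)$ and $(\top)$, the IH applied to the premise yields a disentangled form; I then commute \textsc{H-Mix} steps among themselves (again via $\equiv$) so that the single \textsc{H-Mix} whose codomain contains the active sequent(s) sits immediately above the rule in question, matching the left-hand side of the corresponding rewrite in \cref{fig:hcp-disentangle}. Applying that rewrite produces a sub-derivation whose final step is the associated CP rule: for $(\parr)$, $(\bot)$, $(\one)$, and $(\plus_i)$ the connective is absorbed into a single component, while for \textsc{H-Cut} and $(\tens)$ two components are additionally merged into one. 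For $(\with)$ and $(\top)$, \cref{def:hcp} forces the conclusion to consist of a single sequent, so $n = 1$, and the IH applied to the premises together with the identity translation suffices.

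The main obstacle is the \textsc{H-Cut} and $(\tens)$ cases: I must justify that the two sequents containing the active channel $\tm{x}$ and its dual (respectively $\tm{y}$ and $\tm{x}$) do appear as two \emph{separate} components in the disentangled premise, rather than being merged into larger components already. For the logical rules this is guaranteed by the side condition $\tm{x} \notin \ty{\mathcal{G}}$ imposed by \cref{def:hcp}, which forbids any other component from holding an endpoint of $\tm{x}$; for \textsc{H-Cut} the two sequents are separated by construction of the premise. The remainder is bookkeeping: using $\nu$-commutativity, scope extrusion, and $\ppar$-commutativity from $\equiv$ to arrange the other disentangled components outside the scope of the new cut so that the resulting proof has precisely the shape demanded by the left-hand side of the relevant rewrite in \cref{fig:hcp-disentangle}.
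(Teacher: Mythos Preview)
Your approach is correct and essentially matches the paper's: both push \textsc{H-Mix} applications toward the root using the rewrites of \cref{fig:hcp-disentangle}, with the paper phrasing this as a terse normalisation argument (apply $\moveMixDown$ repeatedly; a mix either gets stuck above a cut, above a $(\tens)$, or reaches the bottom, and every $(\one)$ is paired with \textsc{H-Mix$_0$}) while you organise the same rewriting as a structural induction on the typing derivation. One small note: the ``main obstacle'' you flag is not one---in the premise of \textsc{H-Cut} and $(\tens)$ the two active sequents are already distinct components of the hyper-environment by the shape of the rule, so the induction hypothesis delivers them as separate \cp processes automatically; the side condition $\tm{x}\notin\ty{\mathcal{G}}$ is not what secures this separation.
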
 
\begin{proof}
  We repeatedly apply the $\moveMixDown$-rules to the derivation $\rho$ to move the mixes downwards.
  There are three cases:
  \begin{enumerate*}[label={\alph*)}]
  \item
    if a mix gets stuck above a cut, it forms a \cp cut;
  \item
    if a mix gets stuck above a $(\tens)$, it forms a \cp $(\tens)$;
  \item
    otherwise, it moves all the way to the bottom.
  \end{enumerate*}
  All applications of $(\one)$ are followed by an application of \textsc{H-Mix$_0$}, forming a \cp $(\one)$.
\end{proof}
An environment can be internalised as a type by collapsing it as a series of pars.
\begin{definition}\label{def:bigparr}
  \[
  \begin{array}{lcll}
    \ty{\bigparr(\emptycontext)}
    & = & \ty{\bot}
    \\
    \ty{\bigparr(\tmty{x_1}{A_1} , \dots , \tmty{x_n}{A_n})} 
     & = & \ty{A_1 \parr \dots \parr A_n}
     & \text{ if } n \geq 1
    \\
  \end{array}
  \]
\end{definition}\noindent
\begin{lemma}\label{lem:cp-bigparr}
  If $\seq{\Gamma}$ in \cp, then $\seq{\bigparr\Gamma}$ in \cp.
\end{lemma}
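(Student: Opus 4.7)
The plan is to proceed by induction on $n = |\ty{\Gamma}|$, collapsing the environment into a single channel by iterated use of the $(\parr)$-rule, with the $(\bot)$-rule handling the empty case.

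In the base case $n = 0$, the definition gives $\ty{\bigparr(\emptycontext)} = \ty{\bot}$. From a derivation of $\seq[P]{\emptycontext}$, a single application of the $(\bot)$-rule on a fresh channel $\tm x$ produces $\seq[\cpWait{x}{P}]{\tmty{x}{\bot}}$, which is the desired $\seq{\bigparr\Gamma}$. When $n = 1$ there is nothing to do, since $\ty{\bigparr(\tmty{x_1}{A_1})} = \ty{A_1}$ already.

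For the step $n \geq 2$, I would apply the $(\parr)$-rule to the two rightmost channels of $\ty{\Gamma}$, replacing $\tmty{x_{n-1}}{A_{n-1}}, \tmty{x_n}{A_n}$ by the single assumption $\tmty{x_n}{A_{n-1} \parr A_n}$ and turning $\tm P$ into $\tm{\cpRecv{x_n}{x_{n-1}}{P}}$. This strictly decreases the size of the environment, so the induction hypothesis applies and delivers a derivation of $\seq{\ty{A_1 \parr \dots \parr (A_{n-1} \parr A_n)}}$. Combining from the right matches the right-associative reading of $\ty{A_1 \parr \dots \parr A_n}$ implicit in the definition of $\bigparr$, so this is exactly $\seq{\bigparr\Gamma}$.

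I do not expect any real obstacle. The two mild points worth flagging are that the $n = 0$ case is what motivates setting $\ty{\bigparr(\emptycontext)} = \ty{\bot}$ (the unit of $\parr$), and that the argument makes implicit use of exchange on CP environments to bring the two channels to be combined into adjacent positions before each $(\parr)$-step.
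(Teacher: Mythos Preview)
Your proposal is correct and follows the same approach as the paper, which simply states ``By repeated application of $(\parr)$.'' Your write-up merely makes the induction explicit and spells out the $n=0$ case via $(\bot)$, which the paper's one-line proof leaves implicit; this is additional detail rather than a different argument.
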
 
\begin{proof}
  By repeated application of $(\parr)$.
\end{proof}\noindent
Furthermore, a hyper-environment can be internalised as a type by collapsing it as a series of tensors, where each constituent environment is internalised using $\ty{\bigparr}$. The empty hyper-environment $\ty{\emptyhypercontext}$ is internalised as the unit of tensor.
\begin{definition}\label{def:bigtens}
  \[
  \begin{array}{lcll}
    \ty{\bigtens(\emptyhypercontext)}
    & = & \ty{\one}
    \\
    \ty{\bigtens(\Gamma_1 \hsep \dots \hsep \Gamma_n)}
    & = & \ty{\bigparr\Gamma_1 \tens \dots \tens \bigparr\Gamma_n}
    & \text{ if } n \geq 1
  \end{array}
  \]
\end{definition}\noindent

\begin{theorem}\label{thm:hcp2cp-bigtens}
  If $\seq{\mathcal{G}}$ in \hcp, then $\seq{\bigtens\mathcal{G}}$ in \cp.
\end{theorem}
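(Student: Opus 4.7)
The plan is to reduce the statement to \cp by first disentangling the \hcp derivation into $n$ independent \cp derivations, then internalising each component environment as a single formula using \cref{lem:cp-bigparr}, and finally gluing the results together with the $(\tens)$ rule.

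Write $\ty{\mathcal{G}} = \ty{\Gamma_1 \hsep \dots \hsep \Gamma_n}$. First, I would invoke \cref{lem:hcp-disentangle}: this yields \cp processes $\tm{P_1}, \dots, \tm{P_n}$ such that $\seq[P_i]{\Gamma_i}$ is derivable in \cp for each $i$. For each such derivation, \cref{lem:cp-bigparr} furnishes a \cp derivation of $\seq{\tmty{z_i}{\bigparr\Gamma_i}}$, where the $\tm{z_i}$ are chosen to be pairwise distinct fresh names (which we may do by $\alpha$-renaming, since all of these \cp judgements are independent).

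I would then proceed by induction on $n$. If $n = 0$, then $\ty{\bigtens\mathcal{G}} = \ty{\one}$, and the $(\one)$ rule gives $\seq[\cpHalt{x}]{\tmty{x}{\one}}$ directly. If $n \geq 1$, I would iteratively combine the individual derivations using $(\tens)$: given a derivation of $\seq{\tmty{w}{\bigparr\Gamma_1 \tens \dots \tens \bigparr\Gamma_k}}$ together with the derivation of $\seq{\tmty{z_{k+1}}{\bigparr\Gamma_{k+1}}}$, one application of $(\tens)$ yields a derivation of $\seq{\tmty{w'}{\bigparr\Gamma_1 \tens \dots \tens \bigparr\Gamma_{k+1}}}$ on a fresh $\tm{w'}$. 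After $n-1$ such steps, we reach $\seq{\bigtens\mathcal{G}}$, as required.

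The main obstacle here is really just \cref{lem:hcp-disentangle}, which does the bulk of the work by extracting independent \cp proofs from an \hcp derivation; once disentanglement is in hand, the remaining $(\parr)$- and $(\tens)$-folding is routine, the only care being to pick fresh channel names so that the disjointness condition on contexts in the \cp $(\tens)$ rule is met.
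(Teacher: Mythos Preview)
Your proposal is correct and matches the paper's proof essentially step for step: case split on whether $\ty{\mathcal{G}}$ is empty (use $(\one)$) or non-empty (apply \cref{lem:hcp-disentangle}, then \cref{lem:cp-bigparr} to each component, then fold with $(\tens)$). The only difference is cosmetic---you phrase the non-empty case as an induction on $n$ and make the freshness of channel names explicit, whereas the paper simply says ``join them using $(\tens)$''.
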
 
\begin{proof}
  By case analysis on the structure of the hyper-environment $\ty{\mathcal{G}}$.
  If $\ty{\mathcal{G}} = \ty{\emptyhypercontext}$, we apply $(\one)$.
  If $\ty{\mathcal{G}} = \ty{\Gamma_1 \hsep \dots \hsep \Gamma_n}$, we apply \cref{lem:hcp-disentangle} to obtain proofs of $\seq{\Gamma_1}, \dots, \seq{\Gamma_n}$ in \cp, then we apply \cref{lem:cp-bigparr} to each of those proofs to obtain proofs of $\seq{\bigparr\Gamma_1}, \dots, \seq{\bigparr\Gamma_n}$, and join them using $(\tens)$ to obtain a single proof of $\seq{\bigtens\mathcal{G}}$ in \cp.
\end{proof}\noindent

\section{Related Work}
\label{sec:related-work}

Since its inception, linear logic has been described as the logic of concurrency \cite{girard1987}. Correspondences between the proof theory of linear logic and variants of the \textpi-calculus emerged soon afterwards \cite{abramsky1994,bellin1994}, by interpreting linear propositions as types for channels. Linearity inspired also the seminal theories of linear types for the \textpi-calculus \cite{kobayashi1999} and session types \cite{honda1998}. Even though the two theories do not have a direct correspondence with linear logic, the link is still strong enough that session types can be encoded into linear types \cite{dardha2017}.

It took more than ten years for a formal correspondence between linear logic and (a variant of) session types to emerge, with the seminal paper by Caires and Pfenning~\cite{caires2010}. This inspired the development of Classical Processes by Wadler~\cite{wadler2012}.

The idea of using hypersequents to capture parallelism in linear logic judgements is not novel: Carbone \etal~\cite{carbone2018} extended the multiplicative-additive fragment of intuitionistic linear logic with hypersequents to type global descriptions of process communications known as choreographies. This work is distinct from our approach in that \hcp is based on classical linear logic and manipulates hypersequents differently: in Carbone \etal~\cite{carbone2018}, hypersequents can be formed only when sequents share resources (\cf \textsc{H-Mix}), and resource sharing is then tracked using an additional connection modality (which is not present in \hcp).

\section{Conclusions and Future Work}
\label{sec:conclusion}
In this paper, we introduced \hcp, a variant of \dhcp which sits in between \cp and \dhcp. It has reduction semantics, and does not allow for delayed actions, like \cp, but registers parallelism using hypersequents, like \dhcp. This results in a calculus which structurally resembles \dhcp, but which is behaviourally much more like \cp: all \cp processes can be translated to \hcp processes, and this translation preserves the reduction behaviour of the process. The key insight to making this calculus work is to add a side condition to the logical rules in the type system which rules out self-locking processes---processes which act on both endpoints of a channel, \eg $\tm{\piRecv{x}{}{\piBoundSend{x}{}{\piHalt}}}$.

\hcp focuses on the basic features of \cp, corresponding to multiplicative applicative linear logic. In the future, we intend to study the full version of \hcp, which includes exponentials and quantifiers. Furthermore, we intend to study extensions to \hcp which capture more behaviours, such as recursive types~\cite{lindley2016} and access points~\cite{gay2009}. Separately, we would like to extend the reduction semantics of \hcp to cover all of \dhcp by adding delayed actions.

\bibliographystyle{eptcs}
\bibliography{main}
\end{document}